\newcommand{\Cost}{T}
\newcommand{\AlgL}{$\epsilon$-\textsc{Broadcast}}
\def\myval{-3pt}
\newcommand{\comment}[1]{}
\newtheorem{theorem}{Theorem}
\newtheorem{fact}{Fact}
\newtheorem{corollary}{Corollary}
\newtheorem{lemma}{Lemma}
\title{\fontsize{15.5}{13}\selectfont Making Evildoers Pay:\vspace{5pt}\\ Resource-Competitive Broadcast in Sensor Networks\vspace{-5pt}}
\author{Seth Gilbert \and Maxwell Young}
\begin{document}

\date{}

\maketitle

\begin{abstract} 
Consider a time-slotted, single-hop, wireless sensor network consisting of $n$ correct devices and and $f\cdot n$  Byzantine devices where $f\geq 0 $ is any constant; the Byzantine devices may or may not outnumber the correct ones. There exists a trusted sender Alice who wishes to deliver a message $m$ over a single channel to the correct devices. There is also an evil user Carol who controls the Byzantine devices and uses them to disrupt the communication channel. For a constant $k\geq 2$, the correct and Byzantine devices each possess a meager energy budget of $O(n^{1/k})$, Alice and Carol each possess a limited budget of $\tilde{O}(n^{1/k})$, and sending or listening in a slot incurs unit cost. This setup captures the inherent challenges of guaranteeing communication despite scarce resources and attacks on the network. Given this Alice versus Carol scenario, we ask: Is communication of $m$ feasible and, if so, at what cost? 

We develop a protocol which, for an arbitrarily small constant $\epsilon>0$, ensures that at least $(1-\epsilon)\,n$ correct devices receive $m$ with high probability. Furthermore, if Carol's devices expend $\Cost$ energy jamming the channel, then Alice and the correct devices each spend only $\tilde{O}(\Cost^{1/(k+1)})$. In other words, delaying the transmission of $m$ forces a jamming adversary to rapidly deplete its energy supply and, consequently, cease attacks on the network.\vspace{-0pt}
\end{abstract}

\section{ Introduction} 

Wireless sensors are continually shrinking, leading to increasingly dense networks built out of increasingly low-power devices.  The concept of dense wireless sensor networks (WSNs) was popularized by the Smart Dust project~\cite{warneke:smart} which provided the foundations for the well-known contemporary motes manufactured by Crossbow~\cite{crossbow_inc} and Dust Networks~\cite{dustnetworks}. While the size of  commercially available units is on the order of a few cubic centimeters, more recent endeavors such as SPECKNET~\cite{specknet} aim to reduce this to the cubic millimeter  scale~\cite{arvind:towards}. With the drive toward smaller wireless devices, it is not difficult to fathom the future deployment of highly dense WSNs and, indeed, the difficulties of communicating in such networks has been considered  previously by the research community~\cite{gau:reliable,chen:location,zhao:understanding}. 

In this paper, we address the challenge of communicating in a dense WSN given an adversary Carol who engages in malicious interference of the wireless medium. Such {\it jamming attacks} have received significant attention in recent years given the ease of perpetrating such attacks and their effectiveness (see~\cite{young:overcoming} and references therein). Jamming constitutes a form of denial-of-service attack that is particularly devastating given that WSN devices, including proposed future architectures~\cite{arvind:towards,chalasani:survey}, are severely energy constrained. Therefore, the prospects for achieving communication seem dire given an attacker who controls a large number of network devices and coordinates their combined resources to jam.

In this energy-starved setting, a sensible approach is to consider the rate at which a jamming adversary is required to expend energy relative to those devices attempting to overcome the jamming. If the adversary's total cost $\Cost$ is substantially higher, then preventing communication for any extended duration is prohibitively expensive and forces the adversary to quickly exhaust her energy supply. Here, a useful measure of cost is the number of slots during which a device is utilizing the channel. Specifically,  sending and listening operations dominate the operating costs of the Telos mote~\textemdash~$35$mW and $38$mW at $0$dBm, respectively~\textemdash~while sleeping incurs negligible cost on the order of $\mu W$. Such a {\it resource-competitive} approach~\cite{gilbert:resource} was first explicitly studied in~\cite{king:conflict,king:conflict_journal} where communication between two devices is guaranteed at an expected cost of $O(\Cost^{\varphi-1})=O(\Cost^{0.62})$ per device where $\varphi=\frac{1+\sqrt{5}}{2}$ is the golden ratio. This result places the adversary at a substantial disadvantage, but we ask: Is it possible to do better?

To address this question, we consider a general network scenario involving $(f+1)n$ devices where $f$ is any positive constant. A powerful adversary Carol controls $t = f\cdot n$  Byzantine devices which may deviate from the protocol arbitrarily; we emphasize that $f>1$ is allowed. Given this attack model, a trusted sender Alice attempts to propagate a message $m$ to the remaining correct devices. All devices, correct and Byzantine, have a severely constrained energy budget. We exploit the following two insights: If a small {\it tunable} constant fraction of devices are allowed to terminate without receiving $m$, and we seek guarantees only with high probability, then significant improvements are possible. Given this, we derive the following main result:\vspace{-0pt}

\begin{theorem}\label{theorem:competitive_costs_general}
Let $k\geq 2$. Assume Alice has an individual budget of $O(n^{1/k}\ln^{k}n)$ and aims to deliver a message to $n$ correct nodes. Assume Carol is an adaptive adversary with an individual budget of $\tilde{O}(n^{1/k})$ who controls $f\cdot n$ Byzantine nodes for any constant $f\geq 0$. Each node, correct and Byzantine, possesses a budget of $O(n^{1/k})$. Then, for $n$ sufficiently large, there is a protocol that guarantees the following properties with high probability: \vspace{-3pt}
\begin{itemize}

\item If Carol and her  Byzantine nodes jam for $\Cost$ slots, then Alice and the correct nodes each incur an individual cost of only $\tilde{O}(\Cost^{\frac{1}{k+1}} + 1)$ and $O(\Cost^{\frac{1}{k+1}} + 1)$, respectively.\vspace{-3pt}

\item At least $(1-\epsilon)\,n$ correct nodes receive the message for any arbitrarily small constant $\epsilon>0$ and Alice and all correct nodes terminate within $O(n^{1+(1/k)})$ slots.\vspace{-3pt}

\end{itemize}
\noindent where $\tilde{O}$ denotes the existence of polylogarithmic terms. Additionally, if $f < 1/24$, then these results hold when Carol is also a reactive adversary.

\vspace{-3pt}
\end{theorem}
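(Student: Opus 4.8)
The plan is to structure \AlgL\ as a sequence of geometrically growing epochs, making the analysis self-similar across scales so that a single per-epoch lemma can be summed. Concretely, epoch $i$ would occupy $\ell_i = \Theta(\lambda^i)$ slots for a constant $\lambda>1$, and the schedule halts once the cumulative length reaches $\Theta(n^{1+1/k})$; since $\sum_i \ell_i$ is geometric, the final epoch dominates and has length $\Theta(n^{1+1/k})$. Each epoch is a self-contained broadcast attempt at an escalating ``investment level'': every node is active in only $\Theta(\ell_i^{1/(k+1)})$ of the epoch's slots, informed nodes (those already holding $m$) transmitting in a random subset and uninformed nodes listening in a random subset. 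Crucially, \AlgC\ is interleaved so that nodes stay essentially asleep and only escalate to the next investment level when they detect channel activity; this is what ties the honest cost to Carol's actual expenditure rather than to the (vastly larger) epoch length, and it is what delivers the additive ``$+1$'' term when $\Cost$ is small.

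The crux is a per-epoch dichotomy (investment) lemma: in epoch $i$, \emph{either} the message reaches $(1-\epsilon)n$ correct nodes, \emph{or} Carol's nodes jammed $\Omega(\ell_i)$ of that epoch's slots. I would prove this by running $O(\log n)$ epidemic spreading rounds inside the epoch and showing, via a first/second-moment argument, that in each round the informed set grows by a constant factor unless a constant fraction of the round's slots are jammed. The delicate point is collision control: the aggregate send rate per slot must be kept at $\Theta(1)$ so that a constant fraction of occupied slots carry exactly one clean copy of $m$, and I would tune the sampling so that this holds while each node still spends only $O(\ell_i^{1/(k+1)})$. Because $\Theta(n)$ nodes share the listening load, the aggregate sampling is large enough to drive the required Chernoff concentration even though each individual node's effort is tiny; the polylogarithmic slack needed to seed the process reliably from a single point is charged to Alice, which is exactly why correct nodes get the clean $O(\Cost^{1/(k+1)})$ bound while Alice incurs $\tilde O(\Cost^{1/(k+1)})$. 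Contrapositively, to suppress growth across all $O(\log n)$ rounds Carol must jam a constant fraction of the $\Theta(\ell_i)$ slots, i.e.\ pay $\Omega(\ell_i)$.

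Given the dichotomy, the resource-competitive bound follows by summation. Let $i^{*}$ be the first epoch Carol fails to block; then $m$ is delivered in epoch $i^{*}$, and $\Cost \ge \sum_{j<i^{*}} \Omega(\ell_j) = \Omega(\ell_{i^{*}-1}) = \Omega(\ell_{i^{*}})$ by the geometric series, while the honest cost is $\sum_{j\le i^{*}} O(\ell_j^{1/(k+1)}) = O(\ell_{i^{*}}^{1/(k+1)}) = O(\Cost^{1/(k+1)})$, again dominated by the last term. The $\epsilon$-relaxation enters when bounding the spreading to $O(\log n)$ rounds: allowing $\epsilon n$ stragglers lets us stop before the slow tail of the epidemic, charging those nodes to that tail. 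Feasibility is then arithmetic: the final epoch length $n^{1+1/k}$ yields per-node cost $(n^{1+1/k})^{1/(k+1)} = n^{(1+1/k)/(k+1)} = n^{1/k}$, matching the budget exactly, with Alice's extra $\ln^{k} n$ factor covering the $k$ levels of amplification; termination within $O(n^{1+1/k})$ slots is immediate from the schedule. All high-probability claims come from a union bound over the $O(\log n)$ epochs, the $O(\log n)$ rounds per epoch, and the $n$ nodes.

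The main obstacle I expect is the dichotomy lemma against an \emph{adaptive} Carol, and then its extension to a \emph{reactive} Carol. An adaptive adversary fixes her jamming pattern knowing the protocol but not future coin flips, so the moment argument must be shown to hold against the worst such pattern simultaneously with the collision-control estimate. A reactive adversary is harder still: she senses the channel within a slot and jams in response, so I must rule out that this within-slot feedback lets her block growth at sub-$\Omega(\ell_i)$ cost. This is precisely where the hypothesis $f<1/24$ is used: bounding the Byzantine fraction guarantees that the spurious transmissions Carol can inject---whether to trigger reactive collisions or to masquerade as legitimate senders of $m$---remain a small enough fraction of total channel activity that the clean, collision-free coincidence probability stays bounded below by a constant, preserving the dichotomy and hence the entire argument.
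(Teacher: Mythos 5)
Your outer skeleton---geometrically growing rounds, a per-round dichotomy (``either $(1-\epsilon)n$ nodes are informed or Carol jammed $\Omega(\ell_i)$ slots''), and a geometric summation giving honest cost $O(\ell_{i^*}^{1/(k+1)})=O(\Cost^{1/(k+1)})$---matches the paper's structure (Lemmas~\ref{lemma:no_blocking_cost} and~\ref{lemma:cost_ratios}), and your arithmetic $(n^{1+1/k})^{1/(k+1)}=n^{1/k}$ is the right feasibility check. But there are genuine gaps. First, you have no termination mechanism. The paper's request phase, in which uninformed nodes send \texttt{nack}s and Alice and the nodes count noisy slots, is not an optional detail: Alice cannot trust any ``success'' feedback because correct nodes can be spoofed, and without a sound stopping rule her cost is either unbounded or decoupled from $\Cost$. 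Your substitute---nodes ``escalate to the next investment level when they detect channel activity''---is exploitable: a single Byzantine transmission per round costs Carol $O(1)$ yet would drive every honest node up the investment ladder, destroying resource competitiveness. The request phase also changes the exponent arithmetic: because it has a different length than the inform/propagation phases, Carol gets a second blocking strategy, and balancing the two ratios ($a/(a+b)$ against $a/(b/2+1)$, and likewise for nodes) is precisely what pins down $a=1/k$, $b=1$ and hence the exponent $1/(k+1)$. The ``$+1$'' additive term, similarly, comes from Lemma~\ref{lemma:no_block_costs} (polylog cost when Carol never blocks), not from any sleep-until-activity mechanism. Second, your inner ``epidemic with constant-factor growth over $O(\log n)$ rounds at aggregate send rate $\Theta(1)$'' ignores the budget constraint on senders: when the informed set $S$ is small, its members cannot afford an aggregate send rate of $\Theta(1)$ over a phase of length $2^{(1+1/k)i}$ without exceeding $O(n^{1/k})$. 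The paper instead sends at rate $|S|/n\ll 1$ and compensates with a grouping argument on the listeners, producing exactly $k-1$ staged sets $S_{i,h}$ whose sizes grow by \emph{polynomial} factors $\Theta(2^{i/k}/\ln n)$; this staging is also where Alice's $\ln^k n$ factor is consumed, one $\ln n$ per stage.

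Two further attributions are wrong. The $\epsilon n$ uninformed terminators do not come from truncating a slow epidemic tail; they come from the $n$-uniform adversary, who can block a propagation phase while selectively delivering $m$ to all but $\approx\epsilon' n$ nodes, so that the request-phase thresholds (Lemmas~\ref{lemma:Alice_request1}--\ref{lemma:node_request2}) cause everyone, including those stragglers, to terminate. And in the reactive case, $f<1/24$ is not about limiting spurious Byzantine transmissions diluting the clean-slot probability; it is a budget condition. The paper's fix is to have \emph{correct} nodes inject decoy traffic at rate $\Theta(1/n)$ per slot so that a reactive Carol, who senses energy but not content, cannot tell which active slots carry $m$ and must jam a constant fraction of \emph{all} active slots; the bound $f<1/24$ is exactly what lets the honest side afford both the decoys and the listening needed to outlast Carol's aggregate jamming budget (Lemma~\ref{lemma:reactive_success}). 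Without the decoy-traffic idea, your reactive analysis does not go through: Carol can jam only the slots in which she senses a transmission, which in the unmodified protocol are overwhelmingly the slots carrying $m$.
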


This type of ``almost-everywhere'' communication plays an important role in several distributed computing problems (see~\cite{censor:partial,georgiou:on,dolev:gossiping} and references therein). In many cases, it is sufficient to guarantee a majority of the processes receive critical information. For example, Alice and others may be attempting to implement Paxos~\cite{lamport:paxos}, which relies on the notion of a majority quorum; therefore, $m$ must reach a majority of the nodes. For any $t\leq(1-\delta)n$, for a constant $\delta>0$, our protocol guarantees this property. In general, the ability to reach a $(1-\epsilon)$-fraction of the network is likely to be of importance in emerging WSNs.\vspace{0pt}

\subsection{Alice versus Carol~\textemdash~Our Network Model}\label{subsection:model}\vspace{0pt}

We assume a single hop WSN with $(f+1)n$ devices where $f \geq 0$ is a constant and where $n$ is large; that is, the network is dense. Devices use a time division multiple access (TDMA)-like medium access control (MAC) protocol to access a single communication channel; time is divided into discrete {\it slots}, but no global broadcast schedule is assumed. 

Nodes can detect whether a channel is in use via {\it clear channel assessment (CCA)}~\cite{ramachandran:clear}. This is a common feature; for instance, it is available on the CC2420 transceiver~\cite{cc2420} of the Telos mote, and several theoretical models feature collision detection (see~\cite{alistarh:securing, richa:jamming2,gilbert:malicious,awerbuch:jamming,koo2}). Jamming is indistinguishable from the case when two or more legitimate messages collide over the channel. Furthermore, jamming or a collision can only be detected on the receiving end of the wireless channel and, when this occurs, any received data is discarded. Finally, we assume that the absence of channel activity cannot be forged; in practice, such forging would be difficult~\cite{capkun:integrity}.\smallskip
 
\noindent{\bf Network Participants:} There are $(f+1)n$ correct nodes in the network of which $t\leq f\cdot n$ suffer a Byzantine fault and may deviate {\it arbitrarily} from any prescribed protocol. Each node is limited by a sublinear {\it budget} of at most $C\,n^{1/k}$ for any constant integer $k\geq 2$ and a sufficiently large constant $C>0$.  

Messages sent by Alice can be authenticated. For example, scalable dissemination of a {\it small} number of public keys is possible and we may assume that her public key (and, perhaps, only hers) is known to all receivers. Other authentication schemes can be assumed~\cite{chan:key}; this is a partially-authenticated Byzantine model since Alice is the only participant who can be authenticated. Therefore, attempts to tamper with $m$ or spoof Alice can be detected. However, correct nodes may be spoofed which allows Carol to repeatedly request retransmissions of $m$ from Alice. Our protocol must be resource competitive despite such a {\it spoofing attack} and we address this in Section~\ref{subsection:request}.

Finally, as a trusted sender, Alice is invested in delivering information to the network; consequently, we expect her to bear more of the communication costs; however, given the scarce energy resources of WSNs, we still enforce a fairly strict budget on Alice. Specifically, for $k=2$, her budget is at most $C\, n^{1/k}\ln n$, the equivalent of only $O(\ln n)$ nodes; for general $k\geq 3$, her budget is at most  $C\, n^{1/k}\ln^k n$ (see Section~\ref{subsection:general_k}). Note that, for the purposes of symmetry, we concede the same to Carol (see below).

We do not assume that jamming has a uniform impact on the correct nodes.  Any jamming by Carol or her Byzantine nodes can cause collisions and lost messages for some participants, while others receive the message correctly. More formally, a {\it $\ell$-uniform adversary}~(see~\cite{richa:jamming2}) is one who may partition the nodes into at most $\ell$ sets, each of which experience a different jamming schedule. We assume the worst-case, that Carol is an $n$-uniform adversary; therefore, she selects which nodes may detect jamming on an individual basis.  Given the utility of collision detection, this capability yields a powerful advantage to Carol while abstracting many of the challenges to reliable wireless communication including hidden terminals and fading effects. Carol also possesses full information on how nodes have behaved (in terms of sending/listening) in the past and uses this knowledge to inform future attacks; that is, she   {\it adaptive}. In this extended abstract, we assume that the actions
of a node in the current slot are unknown to Carol; however, with modifications discussed in Section~\ref{section:noise}, our results hold when Carol possesses this information (ie. she is {\it reactive}) for $f < 1/24$. Finally, for the purposes of symmetry, when $k=2$, we  treat Carol as an additional Byzantine node with an individual  budget of $C\,n^{1/k}\ln n$ to match that of Alice; for general $k\geq 3$, her budget is at most  $C\, n^{1/k}\ln^k n$ (see Section~\ref{subsection:general_k}). \smallskip  

\noindent{\bf Our Goal:}  Alice wishes to  deliver $m$ to as many correct nodes as possible while Carol, along with her  Byzantine nodes, aims to prevent communication. Alice, Carol, correct nodes, and Byzantine nodes incur a unit cost of $1$ for sending, listening, jamming, or altering messages. 

We define ``with high probability'' (w.h.p.) to mean with probability at least $1-n^{-c}$ for some constant $c>0$ we can tune. Our goal is to design a protocol that guarantees w.h.p.  delivery of  $m$ to as many nodes as possible while ensuring the following two properties.  First, since all participants are energy starved, the protocol should be {\it load balanced}; that is, Alice and each correct node should incur asymptotically equal costs (up to logarithmic factors). Second, the costs incurred by Alice and each correct node should be asymptotically less than the total cost incurred by Carol and her nodes; that is, our protocol should be resource competitive.  \vspace{2pt}

\noindent{\bf A Note on Resource Competitiveness:} We aim to show that, while Carol and her Byzantine nodes may deplete their collective budget in attacking the network, each individual correct node spends relatively little in order to achieve communication. We are focusing on an individual correct node's cost compared to the aggregate cost incurred by Carol and her Byzantine nodes; call this a {\it local perspective}. But why not consider a {\it global perspective} by using the aggregate cost of Alice and her correct nodes for comparison? 

There are several points in response. First, resource competitiveness from a local perspective is not a trivial task, especially given the strict resource constraints placed on nodes. Consider the naive approach where a correct node continually sends $m$ until the jamming stops; this yields very poor resource competitiveness since {\it each} node spends at least as much as the adversary. Indeed, many  algorithms for communication in WSNs suffer similarly. Second, guaranteeing resource competitiveness from a local perspective bounds the relative cost incurred by any single node; that is, the adversary cannot force any particular node to spend a disproportionate amount relative the adversary. In terms of maximizing a the lifetime of a network, it might be undesirable to achieve a global advantage if some nodes end up incurring substantially more relative cost; therefore, a guarantee from a global perspective is not necessarily stronger. 
Furthermore, note that, from a global-perspective,  we are indeed achieving a constant-factor advantage when $f>1$. For further discussion  on aspects of resource competitive analysis. we refer the reader to~\cite{gilbert:resource}. 

\subsection{Related Work}\vspace{0pt}

There are a large number of results addressing general problems involving jamming attacks (see~\cite{young:overcoming}). Closely related to our work are the results in~\cite{king:conflict} which provide the first resource-competitive communication protocol for two devices and also address a simple scenario with $n$ devices. There are several differences between our current work and~\cite{king:conflict}. The latter provides Las Vegas protocols with expected costs and Carol's budget is completely unknown.  For the $2$-node and $n$-node scenarios in~\cite{king:conflict}, the corresponding protocols are not load balanced since Alice spends roughly $D^{0.62}$ while each correct receiving node spends $D$. Finally,~\cite{king:conflict} can tolerate a reactive adversary only if external background communication traffic exists at no cost. In contrast, here we sacrifice  a small number of nodes and focus on dense WSNs; however, our improved costs are guaranteed with w.h.p., our protocol is load-balanced, and the correct nodes themselves bear the costs for thwarting a reactive adversary. Our attack model differs in that we assume that both the correct and Byzantine nodes have roughly the same power. Specifically, Carol's collective budget is at most a constant-factor larger than the aggregate budget of the correct nodes and it is polynomially larger than any single correct node.


Work by Ashraf~{\it et al.}~\cite{ashraf:bankrupting} investigates a similar line of reasoning employing multi-block payloads, so-called ``look-alike packets'' (which bears some resemblance to our strategy for dealing with reactive adversaries in Section~\ref{section:noise}), and randomized wakeup times for receivers to force the adversary into expending more energy in order to effectively jam. Their approach is interesting but differs in many ways from our own and analytical results are not provided.

There are a number of relevant analytical results on jamming. Gilbert~{\it et al.}~\cite{gilbert:malicious} derive deterministic upper and lower bounds on the duration for which communication can be disrupted between two WSN devices where silence cannot be forged. Pelc and Peleg~\cite{pelc:feasibility} examine a random jamming adversary. Koo~{\it et al.}~\cite{koo2} examine the problem of multi-hop broadcast in a grid topology in the presence of jamming when the adversary's budget is exactly known. Awerbuch~{\it et al.}~\cite{awerbuch:jamming} give a jamming-resistant MAC protocol in a single-hop network with an adaptive, rate-limited bursty jammer. Richa~{\it et al}~\cite{richa:jamming2} significantly extended this work to multi-hop networks and, later, to reactive bursty adversaries~\cite{richa:jamming3}.
In models where mutiple channels are available,  Dolev~{\it et al.}~\cite{dolev:gossiping} address a $(1-\epsilon)$ gossiping problem, Gilbert~{\it et al.}~\cite{gilbert:interference} derive bounds on the time required for information exchange given a reactive adversary, and Dolev~{\it et al.}~\cite{dolev:secure} address secure communication while tolerating a non-reactive adversary. 

In addition to pursuing a resource-competitive approach, our work differs from these related works in several ways. Our adversary is $n$-uniform; many previous results assume a $1$-uniform adversary. Furthermore, our adversary can be both adaptive and reactive, and she does not necessarily adhere to a particular jamming strategy (ie. bursty or random). Finally, our protocol does not rely on the availability of multiple channels; something that would likely not hold true given that Carol controls $\Theta(n)$ nodes and the number of channels is quite limited in practice.\vspace{0pt}

\section{Our Algorithm}\label{section:algorithm}\vspace{0pt}

In this section, we focus on the case where $k=2$.  In Section 3, we present the algorithm for general $k$. Our communication algorithm \AlgL~is presented in Figure~\ref{fig:pseudocode1}. Recall we desire an algorithm that: (1) is  load balanced and (2) is resource competitive. The constant $\epsilon>0$ is the upper limit on the fraction of nodes that may terminate the protocol without receiving $m$; we assume it is set prior to deployment. For our analysis, let $\epsilon'>0$ be an arbitrarily small constant (see Section~\ref{subsection:request}) that we set and we will ``renormalize'' by $\epsilon'$ to obtain $\epsilon$ in the statement of our main result. A node $u$ is said to be {\it informed} if $u$ ever receives $m$; otherwise, $u$ is said to {\it uninformed}.  A slot that is either jammed or contains at least one transmission is called {\it noisy}; otherwise, it is called {\it silent}. 

\AlgL~proceeds in rounds indexed by $i$ incrementing from $1$ until communication is achieved. The two parameters $a$ and $b$ will be determined throughout the course of our analysis. Each round consists of three phases:\vspace{-0pt}
\begin{itemize}
\item{\it Inform Phase:} Consists of $2^{(a + b)i}$ slots. Alice send $m$ with probability $\frac{2\ln n}{2^{bi}}$
in each slot. Each node which has not yet received $m$ listens to a slot with probability $\frac{2}{\epsilon' \, 2^{(a + \frac{b}{2})i}}$.\vspace{-7pt}

\item{\it Propagation Phase:} Consists of $2^{(a + b)i}$ slots. Each node $u$ that received $m$ in the preceding inform phase sends $m$ with probability $\frac{1}{n}$ and then terminates at the end of the phase. Each uninformed node listens in each slot with probability $4e(c+1)/2^{(a + (b/2))i}$ for a sufficiently large constant $c>0$.\vspace{-5pt}

\item{\it Request Phase:} Consists of $2^{(\frac{b}{2}+1)i}$ slots.  In each slot, each uninformed node $u$ sends $\texttt{nack}$ with probability $1/n$ and listens with probability $\frac{c+1}{(1-e^{-64\epsilon'})\, 2^{i}}$. If at most $5c\ln n$ noisy slots are heard ($p$ cannot hear its own transmissions), then $u$ terminates. Alice listens with probability $\frac{c\ln n}{(1-e^{-4\epsilon'})\, 2^{((b/2)+1)i}}$ in each slot and she terminates if the number of noisy slots heard is at most $5c\ln n$. \vspace{-0pt}
\end{itemize}

\noindent{\bf Discussion:} Our protocol is parameterized by the two constants $a$ and $b$ and these values dictate the costs to Alice and each node, respectively. In designing our protocol, we do not force values onto $a$ and $b$; rather, these values are derived to achieve both load balancing and resource competitiveness. However, there are some self-evident bounds that we make explicit. Note that, in round $i=\lg{n}$, Alice's maximum expected cost is $\tilde{O}(n^a)$ which implies that $a\leq 1/2$ given the allowed budget. Similarly, each node's cost is $O(n^{b/2})$ which implies that $b\leq 1$.
 
We assume that the constant $C$ used in the budgets for Alice, Carol, and the nodes is large enough to subsume the constants in our protocol; see the details in~Section~\ref{section:correctness}, Lemma~\ref{lemma:competitive_costs}.  Finally, we note that there are two advantages of choosing to send/listen in each  slot independently and uniformly at random. First, our analysis is primarily concerned with $i=\Omega(\log\log n)$; therefore, the expected costs for both Alice and each node are $\Omega(\log n)$ which means that these costs can be bounded to within a constant factor of their expectation via standard Chernoff bounds. Therefore, our protocol's costs are guaranteed with high probability. Second, information of how Alice and each correct node has behaved in the past conveys no information  about their actions in the current slot. Therefore, our protocol does not yield any advantage to an adaptive adversary.\vspace{0pt}

\begin{figure}[t]
\begin{center}
{\small
\fbox{\parbox[t]{6.3in}{\noindent{}\AlgL~for round $i$ when $k=2$\vspace{-4pt}

\begin{itemize}\renewcommand{\labelitemii}{$\circ$} \setlength{\itemindent}{-7pt}

\item{\it Inform Phase -} In each of  $2^{(a + b)i}$ slots:\vspace{-4pt}
\begin{itemize}[leftmargin=5pt]

\item Alice sends $m$ with probability $\frac{2\ln n}{2^{bi}}$. \vspace{-2pt}

\item Each uninformed node listens with probability $\frac{2}{\epsilon'\,2^{(a + \frac{b}{2})i}}$.  
\end{itemize}\vspace{-5pt}

\item{\it Propagation Phase -} In each of $2^{(a + b)i}$ slots:\vspace{-4pt}
\begin{itemize}[leftmargin=5pt] 
\item Each informed node sends $m$ with probability $\frac{1}{n}$  and terminates at the end of the phase.

\item Each uninformed node listens with probability $\frac{4e(c+1)}{2^{ai+(b/2)i}}$.
\end{itemize}\vspace{-5pt}

\item{\it Request Phase -} In each of $2^{(b/2+1)i}$ slots:\vspace{-4pt} 
\begin{itemize} [leftmargin=5pt]

\item Each uninformed node sends \texttt{nack} with probability $\frac{1}{n}$,  listens with probability $\frac{c+1}{(1-e^{-64\epsilon'})2^{i}}$, and terminates if at most $5c\ln n$ noisy slots are heard. 

\item Alice listens with probability $\frac{c\ln n}{(1-e^{-4\epsilon'})2^{(b/2 + 1)i}}$ and terminates if at most $5c\ln n$ \texttt{nack} messages or noisy slots are heard. 
 
\end{itemize}
\end{itemize}
}
} 

} 
\end{center}
\vspace{-18pt}
\caption{Pseudocode for round $i$ when $k=2$.}
\vspace{-5pt}\label{fig:pseudocode1}\end{figure}

\subsection{Analysis of our Protocol}\label{section:analysis}

For the inform phase, let $X_u=1$ if a node $u$ receives $m$, otherwise let $X_u=0$. Note that, for nodes two different nodes $u$ and $v$, $X_u$ and $X_v$ are dependent variables. For example, if $X_u=0$ because Alice never sent $m$ or she was blocked, then it is more likely that $X_v=0$. Similarly, if $X_u=1$, then it is more likely that $X_v=1$. The following concentration result from~\cite{dubhashi:concentration} is useful:\vspace{-1pt}

\begin{theorem}(\cite{dubhashi:concentration})\label{theorem:concentration}
Let $X_1, ... , X_\ell$ be random variables. Let $f$ be a function such that for each $i \in  \{1, . . . , \ell\}$ there is a $c_i\geq 0$ such that $|~E[~f~|~X_1 , . . . , X_i ] - E[~f~| X_1 , . . . , X_{i-1}]~| \leq c_i$. Then:\vspace{-5pt}

$$Pr(f \geq E[X] + \lambda) < e^{-\frac{\lambda^2}{2\sum_{i=1}^{\ell} c_i^2}}$$\vspace{-15pt} $$\vspace{-10pt}Pr(f \leq E[X] - \lambda) < e^{-\frac{\lambda^2}{2\sum_{i=1}^{\ell} c_i^2}}  $$\vspace{-5pt}

\end{theorem}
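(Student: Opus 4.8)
The plan is to recognize this as the Azuma--Hoeffding martingale concentration inequality (the method of bounded differences) and to prove it via the exponential moment method. The hypothesis is precisely a statement about the Doob martingale associated with $f$. First I would define, for $0 \le i \le \ell$, the conditional expectations $Z_i = E[f \mid X_1, \ldots, X_i]$, with the convention $Z_0 = E[f]$, and assume (as is implicit) that $f$ is a measurable function of $X_1, \ldots, X_\ell$ so that $Z_\ell = f$. Setting $D_i = Z_i - Z_{i-1}$, the sequence $(Z_i)$ is a martingale with respect to the filtration $\mathcal{F}_i$ generated by $X_1,\ldots,X_i$, so the tower property gives $E[D_i \mid \mathcal{F}_{i-1}] = 0$; meanwhile the hypothesis $|E[f \mid X_1,\ldots,X_i] - E[f \mid X_1,\ldots,X_{i-1}]| \le c_i$ is exactly the almost-sure bound $|D_i| \le c_i$. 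Thus $f - E[f] = \sum_{i=1}^\ell D_i$ is a sum of bounded, conditionally mean-zero martingale differences.

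Next I would bound the moment generating function of $f - E[f]$. For any $s > 0$, write $E[e^{s(f - E[f])}] = E\!\left[\prod_{i=1}^\ell e^{s D_i}\right]$ and peel off the last factor by conditioning on $\mathcal{F}_{\ell-1}$ via the tower property. The core estimate is Hoeffding's lemma applied conditionally: since $D_i$ takes values in an interval of length at most $2c_i$ and has conditional mean zero, convexity of the exponential yields $E[e^{s D_i} \mid \mathcal{F}_{i-1}] \le e^{s^2 c_i^2/2}$ almost surely. Iterating this bound over $i = \ell, \ell-1, \ldots, 1$ gives $E[e^{s(f - E[f])}] \le \exp\!\left(\tfrac{s^2}{2}\sum_{i=1}^\ell c_i^2\right)$. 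Applying Markov's inequality to $e^{s(f - E[f])}$ then produces $Pr(f - E[f] \ge \lambda) \le \exp\!\left(-s\lambda + \tfrac{s^2}{2}\sum_i c_i^2\right)$, and optimizing over $s$ by taking $s = \lambda / \sum_i c_i^2$ gives the claimed upper-tail bound $e^{-\lambda^2 / (2\sum_i c_i^2)}$. The lower-tail bound follows by applying the identical argument to $-f$, whose Doob martingale differences are $-D_i$ and satisfy the same bounds $|-D_i| \le c_i$.

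I expect the main obstacle to be the conditional Hoeffding step rather than the martingale bookkeeping. Establishing $E[e^{sD} \mid \mathcal{F}_{i-1}] \le e^{s^2 c_i^2/2}$ for a conditionally zero-mean $D$ supported in an interval of length $2c_i$ requires the standard convexity argument: bound $e^{sD}$ above by the chord joining the endpoint values, take the conditional expectation to annihilate the linear term, and then verify that the logarithm of the resulting expression, viewed as a function of $s$, has second derivative at most $c_i^2$, integrating twice from $s=0$. Care is needed that each of these manipulations is valid almost surely under $\mathcal{F}_{i-1}$, and that the support endpoints may themselves be $\mathcal{F}_{i-1}$-measurable without affecting the length bound. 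I would also note that the displayed bounds should read $E[f]$ rather than $E[X]$, since it is $f - E[f]$ that the argument centres and controls.
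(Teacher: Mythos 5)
Your proof is correct: the paper imports this result from its reference without giving a proof, and your argument is exactly the standard one for the method of averaged bounded differences --- identify $Z_i = E[f \mid X_1,\ldots,X_i]$ as the Doob martingale, note the hypothesis is the almost-sure bound $|Z_i - Z_{i-1}| \le c_i$, apply the conditional Hoeffding lemma and the exponential Markov bound, and optimize in $s$. You are also right that the displayed bounds should center at $E[f]$ rather than $E[X]$ (and, pedantically, the Chernoff argument yields $\le$ rather than the strict inequality as printed).
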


\noindent{}Theorem~\ref{theorem:concentration} applies to {\it dependent variables}. Using this result, we show that, if Carol does not perform too much jamming, then w.h.p. there exists a set containing at least $\Theta(\frac{n\ln n}{2^{(b/2)i}})$ informed nodes by the end of the inform phase. We define an inform phase as {\it blocked} if more than half of the slots in this phase are jammed; otherwise, the phase is {\it unblocked}. In a blocked inform phase, Carol decides which nodes, if any, receive $m$ since she is $n$-uniform. We also make use of the following identity:\vspace{-1pt}

\begin{fact}\label{lemma:identity2}
$1-y \geq e^{-2y}$ for any $y\leq 1/2$.\vspace{-1pt}
\end{fact}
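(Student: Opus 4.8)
The plan is to reduce the additive inequality to a multiplicative one and then finish with a single monotonicity argument. Since $e^{2y} > 0$, the claim $1 - y \geq e^{-2y}$ is equivalent to $h(y) \geq 1$, where I set $h(y) := e^{2y}(1 - y)$. The advantage of this reformulation is that the naive difference $g(y) = 1 - y - e^{-2y}$ is \emph{not} monotone on $[0, 1/2]$: its derivative $-1 + 2e^{-2y}$ changes sign at $y = (\ln 2)/2 \approx 0.347$, so working with $g$ directly would force me to track an interior maximum together with both endpoints. Passing to the product $h$ avoids all of this.

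First I would evaluate $h$ at the left endpoint: $h(0) = e^{0}\cdot 1 = 1$. Next I would differentiate, obtaining $h'(y) = 2e^{2y}(1-y) + e^{2y}(-1) = e^{2y}(1 - 2y)$. The key observation is that $e^{2y} > 0$ everywhere, so the sign of $h'(y)$ is exactly the sign of $1 - 2y$; in particular $h'(y) \geq 0$ for every $y \leq 1/2$. Hence $h$ is nondecreasing on $(-\infty, 1/2]$, and in particular on $[0, 1/2]$.

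Combining these two facts yields $h(y) \geq h(0) = 1$ for all $y \in [0, 1/2]$, which rearranges to $1 - y \geq e^{-2y}$, as desired. I would flag one subtlety: the argument genuinely uses $y \geq 0$, which is precisely the regime of interest here since $y$ always denotes a probability in our analysis. For $y < 0$ the same monotonicity gives $h(y) \leq 1$ and the stated inequality in fact \emph{reverses} (e.g.\ at $y = -1$ one has $1 - y = 2 < e^{2}$), so the implicit restriction $0 \leq y \leq 1/2$ is essential rather than cosmetic. Beyond choosing the reformulation that makes the sign of the derivative transparent, there is no real obstacle in this elementary estimate.
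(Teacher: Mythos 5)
Your proof is correct. The paper states this as a bare Fact with no proof at all, so there is no argument to compare against; your monotonicity argument for $h(y)=e^{2y}(1-y)$, with $h(0)=1$ and $h'(y)=e^{2y}(1-2y)\geq 0$ on $(-\infty,1/2]$, is a clean and complete justification. Your side observation is also accurate: as literally stated (``for any $y\leq 1/2$'') the inequality is false for $y<0$ (e.g.\ $y=-1$ gives $2<e^{2}$), so the intended hypothesis is $0\leq y\leq 1/2$; this is harmless in the paper, since every invocation of the Fact uses a nonnegative quantity (a probability or a ratio of the form $\ln n/(\epsilon' 2^{(b/2)i})$ or $1/n$), but it is a legitimate imprecision in the statement worth flagging.
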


Throughout our analysis, we are concerned with $3\lg\ln n \leq i \leq \lg n + O(1)$ as these allow us to derive concentration results; as we will see, the upper bound is a natural limit on the length of time our algorithm runs. When we speak of informed/uninformed nodes, this implicitly applies only to {\it correct} nodes.  

\begin{lemma}\label{lemma:inform1}
Assume at least $\epsilon'\,n$ nodes are uninformed and active at the start of an unblocked inform phase and $3\lg\ln n \leq i \leq \lg n + O(1)$. Then, w.h.p., the number of nodes that become newly informed by the end of this inform phase is at least $\frac{(1-\lambda)n\ln n}{2^{(b/2)i}}$ for some arbitrarily small constant $\lambda>0$ and for $n$ sufficiently large.
\end{lemma}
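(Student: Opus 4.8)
The plan is to separate the two sources of randomness and dispatch them in order: Alice's transmission schedule, which is shared by every receiver and is exactly the source of the dependence flagged before Theorem~\ref{theorem:concentration}, and the receivers' mutually independent listening choices. Write $M=2^{(a+b)i}$ for the number of inform-phase slots, $p_A=\tfrac{2\ln n}{2^{bi}}$ for Alice's per-slot send probability, and $p_L=\tfrac{2}{\epsilon'2^{(a+b/2)i}}$ for a receiver's per-slot listen probability. A fixed uninformed node $u$ becomes informed exactly when it listens in some slot in which Alice sends and which is not jammed for $u$; since only Alice transmits during the inform phase, there are no collisions among correct nodes to account for.

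First I would decouple Alice. Let $G_u$ be the number of slots in which Alice sends and $u$ is not jammed. Because the phase is unblocked, at least half of the $M$ slots are unjammed for the relevant nodes, and Alice's sends are independent across slots with total mean $M p_A = 2\cdot 2^{ai}\ln n$; a Chernoff bound then gives $G_u \ge (1-o(1))\,2^{ai}\ln n =: G$ for a fixed such $u$ with probability $1-n^{-\Theta(1)}$, and a union bound over the at most $n$ uninformed nodes makes $G_u \ge G$ hold simultaneously for all of them w.h.p. The hypothesis $i\ge 3\lg\ln n$ is precisely what forces $Mp_A=\Omega(\mathrm{polylog}\,n)$ to be large enough for these Chernoff and union steps to close. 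I would condition on this event and on Carol's (adaptive, history-determined) jamming pattern for the rest of the argument.

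Conditioned on Alice's schedule and the jamming, the indicators $X_u$ are now independent, with $\Pr[X_u=1]=1-(1-p_L)^{G_u}$. Using $1-p_L\le e^{-p_L}$ together with Fact~\ref{lemma:identity2} and the elementary bound $1-e^{-x}\ge x/2$ for $x\le 1$, and substituting $p_LG=(1-o(1))\tfrac{2\ln n}{\epsilon'2^{(b/2)i}}$ (which is $o(1)$ in the range $i\ge 3\lg\ln n$ for the intended $b=\Theta(1)$), I would lower bound each $\Pr[X_u=1]\ge(1-o(1))\,p_LG$. Summing over the $\ge\epsilon'n$ uninformed nodes -- here the $\epsilon'$ planted in $p_L$ is exactly what cancels the $\epsilon'n$ count -- the conditional expected number of newly informed nodes is at least $(2-o(1))\tfrac{n\ln n}{2^{(b/2)i}}$, comfortably above the target with room to absorb the concentration loss. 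Since $b\le 1$ and $i\le \lg n+O(1)$, this mean is at least $n^{1/2}\ln n\gg\ln n$, so a Chernoff bound -- equivalently Theorem~\ref{theorem:concentration} applied to the now-independent $X_u$ with per-node bounded differences $c_u\le 1$ -- concentrates the count to within a $(1-\lambda)$ factor of its mean w.h.p., giving the claimed $\tfrac{(1-\lambda)n\ln n}{2^{(b/2)i}}$.

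The crux is the concentration step, and specifically the order of conditioning. The $X_u$ are genuinely dependent through Alice, so one cannot simply apply a Chernoff bound to $\sum_u X_u$; but one also cannot apply Theorem~\ref{theorem:concentration} naively with Alice's per-slot sends as the exposed variables, because a single slot in which Alice sends can inform as many as $\approx\epsilon'n\,p_L$ nodes. Those are high-influence variables, and the resulting $\sum_i c_i^2$ is far too large to yield a useful bound at large $i$ (where $a<b$). Fixing Alice's good sends through a separate high-probability event first, and only then exposing the conditionally independent receiver outcomes (each of influence at most $1$), is what keeps the bounded differences small enough. The remaining technical point I would need to pin down is the precise meaning of \emph{unblocked} for an $n$-uniform adversary -- i.e.\ arguing that an unblocked phase leaves $\Omega(2^{ai}\ln n)$ of Alice's transmissions unjammed for all but a negligible set of nodes -- so that the per-node estimate $G_u\ge G$ survives the union bound of the decoupling step.
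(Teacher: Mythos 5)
Your proposal is correct, but it reaches the concentration step by a genuinely different route than the paper. The paper computes $\Pr(X_u=1)\ge 1-e^{-2\ln n/(\epsilon' 2^{(b/2)i})}\ge \frac{\ln n}{\epsilon' 2^{(b/2)i}}$ directly from the per-slot product $\prod_j\bigl(1-p_Ap_Lq_j\bigr)$ with $\sum_j q_j\ge s/2$, and then applies Theorem~\ref{theorem:concentration} to the \emph{dependent} indicators $X_1,\dots,X_{\delta n}$, simply asserting that the martingale differences satisfy $c_u=1$; the failure probability it obtains is $e^{-\Theta(\lambda^2\ln^2 n)}$. You instead expose the randomness in two stages: first fix Alice's unjammed transmission count $G_u\ge(1-o(1))2^{ai}\ln n$ by a Chernoff bound and a union bound over nodes, then observe that conditionally the $X_u$ are independent and apply an ordinary Chernoff bound to their sum (mean $\Omega(\sqrt n\,\ln n)$). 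What your decomposition buys is that it sidesteps the one genuinely delicate point in the paper's argument: for positively correlated $X_u$ (correlated through Alice's schedule and the jamming), the bound $|E[f\mid X_1,\dots,X_u]-E[f\mid X_1,\dots,X_{u-1}]|\le 1$ is not automatic from $f$ being $1$-Lipschitz, and the paper does not verify it; your conditioning makes the influence-$1$ claim trivially true. The costs are a slightly longer argument and the extra union-bound/adaptivity bookkeeping for $G_u$ (Carol's jamming in slot $j$ may depend on Alice's earlier sends, so strictly one needs a martingale form of the Chernoff bound there, a point the paper's product formula also glosses over). Your residual worry about what ``unblocked'' means for an $n$-uniform adversary is resolved in the paper only implicitly: the proof treats $q_j$ as a single global per-slot jamming indicator, so both proofs rest on the same reading. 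The constants also check out: your conditional mean $(2-o(1))\frac{n\ln n}{2^{(b/2)i}}$ dominates the paper's $\frac{n\ln n}{2^{(b/2)i}}$ and comfortably clears the target $\frac{(1-\lambda)n\ln n}{2^{(b/2)i}}$.
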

\begin{proof}
Let $s=2^{(a+b)i}$. Define a binary random variable such that $X_u=1$ if node $u$ obtains $m$ in the inform phase; otherwise, let $X_u=0$. Let $q_j=1$ if Carol does not jam in slot $j$ and let $q_j=0$ otherwise. Then $Pr(X_u=1) = 1 - Pr(u \mbox{~fails in inform phase})$ $=1 - \prod_{j=1}^{s}(1 -  Pr(u\mbox{~succeeds in slot~}j)) = $ $1-\prod_{j=1}^{s}(1-\frac{2\ln n}{2^{bi}}\frac{2}{\epsilon' 2^{(a+b/2)i}}\cdot q_j)$ $\geq 1 - e^{-\frac{4\ln n}{\epsilon' 2^{(a + (3/2)b)i}} \sum_{j=1}^s q_j} \geq 1 - e^{-\frac{2\ln n}{\epsilon' 2^{(b/2)i}}}$ given that $\sum_{j=1}^s q_j \geq s/2$ since the inform phase is not blocked. Let $y=\frac{\ln n}{\epsilon' 2^{(b/2)i}}$. By Fact~\ref{lemma:identity2}, it follows that $1 - y = 1 - \frac{\ln n}{\epsilon' 2^{(b/2)i}} \geq e^{-2\ln n/\epsilon' 2^{(b/2)i}}$ since $y\leq 1/2$ given the range of $i$. Therefore, we conclude that $Pr(X_u=1) \geq 1 - e^{-\frac{2\ln n}{\epsilon' 2^{(b/2)i}}} \geq \frac{\ln n}{\epsilon' 2^{(b/2)i}}$. Now let $f = \sum_{u=1}^{\delta n} X_u$ where $1 \geq \delta\geq\epsilon'$ and there are $\delta\,n \geq \epsilon'\,n$ uninformed nodes still active. By linearity of expectation, the expected number of nodes that receive $m$ in the inform phase is $E[f] \geq \frac{\delta n\ln n}{\epsilon'2^{(b/2)i}} \geq \frac{n\ln n}{2^{(b/2)i}}$. To prove a concentration result with dependent variables, we note that $|~E[~f~|~X_1, ..., X_u] - E[~f~|~X_1, ..., X_{u-1}~]~| \leq c_u = 1$ and use Theorem~\ref{theorem:concentration}.  For an arbitrarily small constant $\lambda>0$, it follows that $Pr(f < \frac{(1-\lambda)\,\delta\,n\ln n}{2^{(b/2)i}}) < e^{-\frac{\lambda^2\,n^2\ln^2n}{2^{bi}\,2n}}= e^{-\Theta(\lambda^2 \ln ^2 n)}$ since $i\leq \lg n + O(1)$. For sufficiently large $n$, this implies the desired upper bound result.
\end{proof}

\noindent Lemma~\ref{lemma:inform1} reveals the importance of Alice's $O(2^{ai}\ln n)$ budget as it facilitates a sufficiently large $S_1$. The upper bound is similar:\vspace{0pt}

\begin{lemma}\label{lemma:inform2}
Assume at least $\epsilon'\,n$ nodes are uninformed at the start of an unblocked inform phase and $3\lg\ln n \leq i \leq \lg n + O(1)$. Then, w.h.p., the number of nodes that become newly informed by the end of this inform phase is at most $\frac{(1+\lambda')4n\ln n}{\epsilon'\,2^{(b/2)i}}$ for an arbitrarily small  constant $\lambda'>0$ and for $n$ sufficiently large.\vspace{0pt}
\end{lemma}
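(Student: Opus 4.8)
The plan is to mirror the proof of Lemma~\ref{lemma:inform1}, but now to bound $Pr(X_u=1)$ from \emph{above} and to invoke the upper-tail estimate of Theorem~\ref{theorem:concentration}. Keep $s=2^{(a+b)i}$ and let $X_u=1$ indicate that an uninformed node $u$ becomes newly informed during the phase. A correct node can receive $m$ in slot $j$ only if Alice sends (probability $\frac{2\ln n}{2^{bi}}$) and $u$ listens (probability $\frac{2}{\epsilon'\,2^{(a+b/2)i}}$); writing $p_j$ for this per-slot success probability and dropping the jamming indicator $q_j\le 1$ (setting $q_j=1$ only inflates the success probability, which is exactly what an upper bound needs), we get $p_j \le \frac{4\ln n}{\epsilon'\,2^{(a+(3/2)b)i}}$. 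Note that, unlike the lower bound, this direction does not actually require the unblocked hypothesis; it is retained only for symmetry with Lemma~\ref{lemma:inform1}.

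First I would bound $Pr(X_u=1)=1-\prod_{j=1}^{s}(1-p_j)\le \sum_{j=1}^{s}p_j \le s\cdot\frac{4\ln n}{\epsilon'\,2^{(a+(3/2)b)i}} = \frac{4\ln n}{\epsilon'\,2^{(b/2)i}}$, using the elementary inequality $1-\prod_j(1-p_j)\le\sum_j p_j$. Summing over the (at most $n$) uninformed active nodes, linearity of expectation gives $E[f]\le U$, where $U:=\frac{4n\ln n}{\epsilon'\,2^{(b/2)i}}$. Finally, exactly as in Lemma~\ref{lemma:inform1}, exposing one node's outcome at a time changes $f$ by at most $c_u=1$ despite the dependence among the $X_u$, so Theorem~\ref{theorem:concentration} applies. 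Taking $\lambda=\lambda' U$ and using $E[f]\le U$ yields $Pr\big(f>(1+\lambda')U\big)\le Pr\big(f>E[f]+\lambda' U\big) < e^{-(\lambda' U)^2/(2n)}$.

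The one genuinely new point to verify is that this tail is still $n^{-\Omega(1)}$: substituting $U$ gives exponent $\frac{(\lambda' U)^2}{2n}=\frac{8(\lambda')^2 n\ln^2 n}{(\epsilon')^2\,2^{bi}}$, and since $i\le \lg n+O(1)$ and $b\le 1$ we have $2^{bi}=O(n)$, so the exponent is $\Omega\big((\lambda')^2\ln^2 n/(\epsilon')^2\big)=\Omega(\ln^2 n)$ and the claimed bound holds w.h.p. The main obstacle is thus not conceptual but bookkeeping: I expect the only real care to be (i) choosing the union-bound direction appropriate to an upper bound and (ii) confirming that the enlarged mean $U$ (a factor $4/\epsilon'$ above the lower-bound mean) does not spoil the $\Omega(\ln^2 n)$ exponent, which it does not because $2^{bi}$ remains $O(n)$ across the entire range $3\lg\ln n\le i\le \lg n+O(1)$.
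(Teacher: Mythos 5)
Your proposal is correct and follows essentially the same route as the paper: upper-bound the per-node success probability by $\frac{4\ln n}{\epsilon'\,2^{(b/2)i}}$, sum to bound $E[f]$, and apply Theorem~\ref{theorem:concentration} with $c_u=1$ to get a tail of $e^{-\Omega(\ln^2 n)}$ since $2^{bi}=O(n)$. The only differences are cosmetic~\textemdash~you use the union bound $1-\prod_j(1-p_j)\le\sum_j p_j$ where the paper chains $e^{-x}$ estimates via Fact~\ref{lemma:identity2}, and you correctly observe that the unblocked hypothesis is not actually needed in this direction.
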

\begin{proof}
Let $s=2^{(a+b)i}$ and defining $X_u$ the same way, we have $Pr(X_u=1) = 1-\prod_{j=1}^{s}(1-\frac{4\ln n}{\epsilon'\,2^{(a+(3/2)b)i}}\cdot q_j)$ and note that $\frac{4\ln n}{\epsilon'\,2^{(a+(3/2)b)i}}\leq 1/2$ for the range of $i$ and sufficiently large $n$. Therefore,  $Pr(X_u=1) = 1-\prod_{j=1}^{s}(1-\frac{4\ln n}{\epsilon'\,2^{(a+(3/2)b)i}}\cdot q_j) \leq 1 - e^{-\frac{4\ln n}{\epsilon' 2^{(b/2)i}}}$ using the fact that $\sum_{j=1}^s q_j \geq s/2$ and Fact~\ref{lemma:identity2}. Then $Pr(X_u=1) \leq  1 - e^{-\frac{4\ln n}{\epsilon' 2^{(b/2)i}}} \leq \frac{4\ln n}{\epsilon' 2^{(b/2)i}} $  where the  inequality follows from the standard $1-x \leq e^{-x}$. Therefore, the expected number of newly informed nodes is less than $\frac{4\,n\ln n}{\epsilon'\,2^{(b/2)i}}$. Using Theorem~\ref{theorem:concentration}, where $\lambda'>0$ is an arbitrarily small constant, the probability that we have more than $\frac{(1+\lambda')4n\ln n}{\epsilon\,2^{(b/2)i}}$ newly informed nodes is superpolynomially small in $n$. For $n$ sufficiently large, this yields the desired lower bound.\vspace{0pt}
\end{proof}

\noindent{}Therefore, so long as at least  $\epsilon'\,n$ nodes are uninformed and active, we can generate a set $S_i$ of at least $\Theta(\frac{n\ln n}{2^{(b/2)i}})$  newly informed nodes for $3\lg\ln n \leq i \leq \lg n + O(1)$; note, the size of this set is always sublinear in $n$. 
Moreover, the size of this set is decreasing as $i$ increases. This is due to the increasing length of the rounds and the limited energy afforded to each node. 

In round $i$ of the propagation phase, newly informed nodes in $S_i$ send $m$ to the remaining uninformed nodes. A propagation phase is {\it blocked} if more than half of the slots are jammed; otherwise, the phase is {\it unblocked}. Again, in a blocked propagation phase, Carol can decide which nodes receive $m$ since she is $n$-uniform.  

\begin{lemma}\label{lemma:inform_phase2}
Consider $(3/b)\lg\ln n \leq i \leq \lg n + O(1)$ and assume that the inform phase in round $i$ was not blocked. Then, if the propagation phase in round $i$ is not blocked, w.h.p. all nodes are informed by the end of the propagation phase. 
\end{lemma}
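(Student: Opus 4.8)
The plan is to show that every node still uninformed at the start of round $i$'s propagation phase hears $m$ \emph{cleanly}---that is, in some slot exactly one member of $S_i$ transmits, the node happens to be listening, and the slot is not jammed---and then to take a union bound over all such nodes. Recall that the transmitters in the propagation phase are exactly the nodes of $S_i$ that were informed during the preceding inform phase; since that inform phase is unblocked, Lemma~\ref{lemma:inform1} supplies $|S_i| \geq (1-\lambda)n\ln n / 2^{(b/2)i}$ (in the regime where at least $\epsilon' n$ nodes remain uninformed, which is the case of interest; otherwise the almost-everywhere target is already attained). The key structural fact is that $|S_i|$ is simultaneously large and sublinear: large enough that a listener is likely to catch a transmitter, yet sublinear so that, with each transmitter sending at rate $1/n$, collisions among transmitters are rare.

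First I would compute, for a fixed clean slot, the probability $P_1$ that exactly one node of $S_i$ transmits, namely $P_1 = \tfrac{|S_i|}{n}\bigl(1-\tfrac1n\bigr)^{|S_i|-1}$. By Fact~\ref{lemma:identity2}, $\bigl(1-\tfrac1n\bigr)^{|S_i|-1} \geq e^{-2|S_i|/n}$, which is at least $1/2$ for large $n$ because $|S_i|$ is sublinear, so $P_1 \geq |S_i|/(2n)$. Let $p_L = 4e(c+1)/2^{(a+b/2)i}$ be the per-slot listening probability of an uninformed node and let $s = 2^{(a+b)i}$ be the phase length. For a fixed uninformed node $u$, the expected number of slots in which $u$ hears $m$ cleanly is at least $P_1\, p_L\,(s/2)$, where the factor $s/2$ uses that an unblocked phase contains at least $s/2$ non-jammed slots. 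The crux is that the three tuned exponents cancel: substituting the bounds above, $P_1 p_L (s/2) = \Theta\bigl(\tfrac{|S_i|}{n}\,2^{(b/2)i}(c+1)\bigr) = \Theta((c+1)\ln n)$, i.e.\ $\Omega(\log n)$ expected clean receptions per node.

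Next I would pass from expectation to high probability. For a fixed $u$, the per-slot reception events are independent once the (adaptively chosen) jamming pattern is fixed, and---because in this abstract Carol cannot observe current-slot actions---her decision to jam a given slot is independent of whether $u$ succeeds there. Hence, writing $q_j$ for the indicator that slot $j$ is not jammed, $\Pr[u\text{ never hears } m] \leq \prod_{j}(1 - P_1 p_L q_j) \leq (1 - P_1 p_L)^{s/2} \leq e^{-P_1 p_L s/2} = n^{-\Omega(c)}$. A union bound over the at most $n$ uninformed nodes gives total failure probability $n^{1-\Omega(c)}$, so for $c$ a sufficiently large tunable constant every uninformed node becomes informed by the end of the propagation phase, w.h.p.

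The main obstacle is the collision analysis in the middle step: I must argue that the size of $S_i$ delivered by the inform phase lands in exactly the right window so that $P_1 = \Theta(|S_i|/n)$ (transmissions essentially never collide) while the listening rate and phase length remain large enough to guarantee $\Omega(\log n)$ clean receptions per node---this is precisely the cancellation of the exponents $(a+b)i$, $(a+\tfrac b2)i$, and $(\tfrac b2)i$ built into $s$, $p_L$, and $|S_i|$, respectively. Two secondary points need care: that transmissions by Carol's Byzantine nodes merely create collisions and are therefore absorbed into the jammed-slot count (so the unblocked hypothesis still supplies $\geq s/2$ clean slots), and that authentication prevents a spoofed $m$ from being accepted, so that a clean slot with a single $S_i$-transmitter always delivers the genuine message.
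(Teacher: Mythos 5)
Your proposal is correct and follows essentially the same route as the paper's proof: lower-bound the probability $x\cdot\frac1n(1-\frac1n)^{x-1}$ of a single uncollided transmission from $S_i$ using the matching upper and lower bounds on $|S_i|$ from Lemmas~\ref{lemma:inform1} and~\ref{lemma:inform2}, multiply by the listening probability, observe that the exponents cancel to give $\Omega(\ln n)$ expected clean receptions over the $\geq s/2$ unjammed slots of an unblocked phase, and conclude with $\prod_j(1-P_1p_Lq_j)\leq n^{-(c+1)}$ and a union bound. The only cosmetic difference is that you invoke the sublinearity of $|S_i|$ informally where the paper cites Lemma~\ref{lemma:inform2} explicitly for the upper bound, and you add model-level remarks (Byzantine collisions folded into jamming, authentication) that the paper leaves implicit.
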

\begin{proof}
Let $s=2^{(a+b)i}$ be the number of slots and let $x$ be the number of newly informed nodes from the inform phase. Since the inform phase was not blocked,  Lemmas~\ref{lemma:inform1} and~\ref{lemma:inform2} guarantee w.h.p. that $\frac{(1-\lambda)n\ln n}{2^{(b/2)i}} \leq x \leq \frac{(1+\lambda)4n\ln n}{\epsilon'2^{(b/2)i}}$ for some arbitrarily small constant $\lambda>0$ . In a single slot, the probability that exactly one informed node in $S_i$ is sending is lower bounded by $x(\frac{1}{n})$ $(1-\frac{1}{n})^{x-1}$ $\geq \frac{(1-\lambda)n\ln n}{2^{(b/2)i}n}\cdot (1-\frac{1}{n})^{((1+\lambda)4n\ln n/(\epsilon'2^{(b/2)i})) -1}$ $\geq \frac{(1-\lambda)\ln n}{2^{(b/2)i}}\,$ $ e^{-8(1+\lambda)\ln n/(\epsilon'2^{(b/2)i})}$ $\geq \frac{(1-\lambda)\ln n}{e\cdot 2^{(b/2)i}} \geq  \frac{\ln n}{e\cdot 2^{(b/2)i+1}}$ where the second inequality follows by applying Fact~\ref{lemma:identity2}, the third follows by noting that $2^{(b/2)i} \in \omega(\ln n)$ for $i=(3/b)\lg\ln n$ (later we show $b=1$, thus keeping $i$ within proper range), and the fourth follows from setting $\lambda \leq 1/2$. Note that the sublinear upper bound on the size of $S_i$ prevents the probability of exactly one node sending from being too small. Therefore, the probability a particular uninformed node  does not receive $m$ in a single slot is at most $1- \frac{\ln n}{e\cdot 2^{(b/2)i+1}}\,\frac{4e(c+1)}{2^{ai+(b/2)i}}\,q_j$  where $q_j=0$ if Carol jams and $q_j=1$ if she does not. The probability of a specific active and uninformed node failing to obtain $m$  in this phase is at most $\prod_{j=1}^{s}(1-\frac{2(c+1)\ln n}{2^{(a+b)i}}\,q_j )$ $ \leq e^{-\frac{2(c+1)\ln n}{2^{(a+b)i}}\cdot \sum_1^{s} q_j}  < n^{-(c+1)}$ since $\sum_1^{s} q_j \geq \frac{s}{2}$. A union bound over all nodes yields the result.\vspace{0pt}
\end{proof}

Note that any communication from $S_i$ aimed at telling Alice the inform phase was successful could be spoofed by Carol. Therefore, $S_i$ cannot ever replace Alice (allowing her to sleep) since it is impossible to verify that $S_i$ was  created until the protocol terminates. Furthermore, keeping $S_i$ around for use in the propagation phase of round $i+1$ is wasteful since $S_{i+1}$ alone is sufficient. Increasing the sending probability of each node in $S_i$ is also wasteful and causes nodes to exceed their budget in later rounds.  Therefore, $S_i$ terminates at the end of every propagation phase.\vspace{0pt}

\subsection{Request Phase: Tolerating Spoofing}\label{subsection:request}\vspace{0pt}

A request phase in round $i$ is said to be blocked if Carol jams more than $(1-e^{-4\epsilon'})2^{(b/2 + 1)i}=\Omega(2^{(b/2+1)i})$ slots during the phase. Any constant fraction of the request phase will work; however, we choose this threshold to simplify the analysis. Again, recall that $3\lg\ln n \leq i \leq \lg n + O(1)$.  We state two important properties of Alice's termination condition:\vspace{-2pt} 

\begin{lemma}\label{lemma:Alice_request1}
Assume that the request phase is unblocked. If at most $2\epsilon'n$ nodes remain active, where then w.h.p. Alice (correctly) terminates the protocol for $\epsilon'\leq 1/2$.\vspace{0pt}
\end{lemma}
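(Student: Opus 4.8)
The plan is to show that, under the two hypotheses---the request phase is unblocked and at most $2\epsilon' n$ nodes remain active---the number of noisy slots Alice hears is at most $5c\ln n$ with high probability, which is exactly her termination condition. The strategy is to bound separately the two sources of noise Alice can detect (jamming and \texttt{nack} transmissions) and to show that Alice's listening probability has been calibrated so that each source contributes only $\Theta(c\ln n)$ to the expected count, safely below the threshold yet large enough for concentration.

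First I would bound the probability that a single slot carries a \texttt{nack}. Since each active node sends independently with probability $1/n$ and there are at most $2\epsilon' n$ active nodes, the probability that no active node transmits in a given slot is at least $(1-1/n)^{2\epsilon' n} \geq e^{-4\epsilon'}$ by Fact~\ref{lemma:identity2} (applied with $y=1/n\le 1/2$, valid for $\epsilon'\le 1/2$). Hence the probability that a slot contains a \texttt{nack} is at most $1-e^{-4\epsilon'}$; this is precisely the quantity appearing in the denominator of Alice's listening probability, which is no accident.

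Next I would compute the expected number of noisy slots Alice hears over the $s=2^{(b/2+1)i}$ slots, where she listens to each slot independently with probability $p_L = \frac{c\ln n}{(1-e^{-4\epsilon'})\,s}$. Writing ``noisy'' as ``jammed or \texttt{nack}'d'' and applying a union bound on the per-slot noise probability, the jamming contribution is at most $(\text{number of jammed slots})\cdot p_L \le (1-e^{-4\epsilon'})s\cdot p_L = c\ln n$, where the first inequality uses the unblocked assumption, which caps the number of jammed slots at $(1-e^{-4\epsilon'})s$; the \texttt{nack} contribution is at most $s\cdot(1-e^{-4\epsilon'})\cdot p_L = c\ln n$ using the per-slot bound of the previous step. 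Thus the expected number of noisy slots Alice hears is at most $2c\ln n$.

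Finally, I would conclude with a concentration argument. Conditioning on Carol's jamming pattern---which, for the adaptive (non-reactive) adversary, is fixed before the current slot's \texttt{nack} and listening decisions and, by the unblocked assumption, jams at most $(1-e^{-4\epsilon'})s$ slots---the indicators that Alice hears a noisy slot are independent across slots, since the \texttt{nack} decisions and Alice's listening are independent from slot to slot. A standard Chernoff bound then shows the count exceeds $5c\ln n = 2.5\cdot(2c\ln n)$ with probability only $n^{-\Omega(c)}$, so w.h.p. Alice hears at most $5c\ln n$ noisy slots and terminates, which is correct behavior when so few nodes remain. The main obstacle is the calibration step: $p_L$ must be tuned so that both the worst-case jamming (capped by the unblocked threshold) and the worst-case \texttt{nack} rate (capped via Fact~\ref{lemma:identity2} and the $2\epsilon' n$ bound) each land comfortably below the termination threshold while keeping the expected count at $\Theta(c\ln n)$---large enough for Chernoff concentration but small enough to guarantee termination.
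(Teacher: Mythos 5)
Your proof is correct and follows essentially the same route as the paper's: bound the per-slot \texttt{nack} probability by $1-e^{-4\epsilon'}$ via Fact~\ref{lemma:identity2}, split the expected noisy-slot count into a jamming contribution and a \texttt{nack} contribution of $c\ln n$ each (using the unblocked cap on jammed slots), and finish with a Chernoff bound against the $5c\ln n$ threshold. Your extra remark about conditioning on the jamming pattern to justify independence for Chernoff is a slight refinement the paper leaves implicit, but it is not a different argument.
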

\begin{proof}
Let $s=2^{(b/2+1)i}$. The probability that no uninformed node sends a \texttt{nack} in a particular slot is $(1-\frac{1}{n})^{2\epsilon'n}$ for $\epsilon' \leq 1/2$. By Fact~\ref{lemma:identity2}, we have $(1-\frac{1}{n})^{2\epsilon'n}\geq e^{-4\epsilon'}$; therefore, the probability that a slot is noisy is $1 - (1-\frac{1}{n})^{2\epsilon'n} \leq 1- e^{-4\epsilon'}$. Let $Y_j=1$ if slot $j$ is noisy due to a \texttt{nack} message by an uninformed node; otherwise, let $Y_j=0$.  The expected number of noisy slots heard by Alice due to uninformed nodes is at most $E[\sum_{1}^{s} Y_j] \leq \sum_1^{s}\frac{c\ln n}{(1-e^{-4\epsilon'})2^{(b/2+1)i}}\cdot (1-e^{-4\epsilon'}) = c\ln n$. Pessimistically, assume that each of Carol's blocked slots occurs when none of the other uninformed nodes are sending a \texttt{nack} message. Let $Z_j=1$ if slot $j$ is noisy due to Carol jamming; otherwise, let $Z_j=0$. The expected number of jammed slots is at most $E[\sum_{1}^{s} Z_j] \leq \frac{c\ln n}{(1-e^{-4\epsilon'})2^{(b/2+1)i}}\cdot (1-e^{-4\epsilon'})2^{(b/2 + 1)i} = c\ln n$. Therefore, the total expected number of noisy slots that Alice hears is at most $2c\ln n$. By standard Chernoff bounds, the probability of exceeding $5c\ln n$ is at most $1/n^c$.\vspace{0pt}
\end{proof}

\begin{lemma}\label{lemma:Alice_request2}
Assume at least $32\epsilon'n$ nodes are active at the beginning of a request phase where $\epsilon' \leq 1/32$. Then. w.h.p., Alice (correctly) does not terminate.\vspace{0pt}
\end{lemma}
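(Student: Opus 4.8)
The plan is to mirror the argument of Lemma~\ref{lemma:Alice_request1}, but now to \emph{lower}-bound (rather than upper-bound) the number of noisy slots Alice hears, exploiting the fact that Carol's jamming can only help us here. Since we want to show Alice stays awake, and any slot containing a \texttt{nack} is noisy regardless of collisions or jamming, it suffices to count only the slots Alice listens to that carry at least one genuine \texttt{nack} from an active uninformed node; we may pessimistically discard Carol's contribution entirely.

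First I would fix $s = 2^{(b/2+1)i}$ and let $m' \geq 32\epsilon'n$ denote the number of active (uninformed) nodes, which remains constant throughout the phase since no node becomes informed or terminates before the phase ends. Because each such node sends \texttt{nack} independently with probability $1/n$, the probability that a given slot contains at least one \texttt{nack} is $1 - (1-1/n)^{m'} \geq 1 - e^{-32\epsilon'}$ via the standard inequality $(1-1/n)^{m'} \leq e^{-m'/n}$. Defining $W_j = 1$ when Alice listens in slot $j$ and that slot carries a \texttt{nack}, the $W_j$ are independent Bernoulli variables (Alice's listening choices and the per-slot \texttt{nack} events are mutually independent across slots), each with $E[W_j] \geq \frac{c\ln n}{(1-e^{-4\epsilon'})2^{(b/2+1)i}}\,(1-e^{-32\epsilon'})$.

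Summing over the $s$ slots gives $E[\sum_j W_j] \geq \frac{c\ln n\,(1-e^{-32\epsilon'})}{1-e^{-4\epsilon'}}$, and the crux is to show this ratio exceeds $5$. Writing $u = e^{-4\epsilon'}$ and using $e^{-32\epsilon'}=u^8$, the ratio telescopes into the geometric sum $\frac{1-u^8}{1-u} = \sum_{j=0}^{7} e^{-4j\epsilon'}$, which is decreasing in $\epsilon'$ and is therefore minimized over $(0,1/32]$ at $\epsilon' = 1/32$, where $\sum_{j=0}^{7} e^{-j/8} \approx 5.38 > 5$. Hence $E[\sum_j W_j] \geq 5.38\,c\ln n$.

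Finally, since the mean is $\Theta(c\ln n)$ and exceeds $5c\ln n$ by a constant factor, a standard Chernoff lower-tail bound gives $\Pr[\sum_j W_j \leq 5c\ln n] \leq e^{-\Omega(c\ln n)} = n^{-\Omega(c)}$, so w.h.p. Alice hears more than $5c\ln n$ noisy slots and does not terminate, with $c$ tunable to push this below any desired $n^{-c'}$. The main obstacle is the tightness of the constant margin at the boundary $\epsilon' = 1/32$: the crude bound $1-e^{-32\epsilon'}\geq 16\epsilon'$ obtainable from Fact~\ref{lemma:identity2} only yields a ratio of $4$, which is insufficient, so one must retain the exact geometric-sum evaluation above to clear the threshold of $5$.
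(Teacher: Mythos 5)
Your proposal is correct and follows essentially the same route as the paper's proof: discard Carol's contribution since silence cannot be forged, lower-bound the per-slot noise probability by $1-e^{-32\epsilon'}$, compute the expected number of noisy slots Alice hears, and finish with a Chernoff bound. Your constant is actually the more careful one: with the algorithm's stated listening probability $\frac{c\ln n}{(1-e^{-4\epsilon'})2^{(b/2+1)i}}$ the expectation is $\frac{1-e^{-32\epsilon'}}{1-e^{-4\epsilon'}}\,c\ln n \approx 5.38\,c\ln n$ at $\epsilon'=1/32$ (not the $10c\ln n$ the paper claims, which tacitly uses $1-e^{-2\epsilon'}$ in the denominator), so your geometric-sum evaluation and the observation that the margin over $5c\ln n$ is thin (requiring $c$ to be tuned) is a genuine, if minor, sharpening of the argument.
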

\begin{proof}
The bad event occurs if the number of noisy slots that Alice detects is less than $5c\ln n$. Since Carol cannot forge silence, we do not consider her behavior here. The probability of a noisy slot is $1-(1-\frac{1}{n})^{32\epsilon'n} \geq 1 - e^{-32\epsilon'}$. Therefore, as in the proof of Lemma~\ref{lemma:Alice_request1}, $E[Y] \geq \sum_1^{s}\frac{c\ln n}{(1-e^{-2\epsilon'})} \cdot (1-e^{-32\epsilon'})\geq 10c\ln n $ for  $s=2^{(b/2+1)i}$ and any $\epsilon' \leq 1/32$. The result follows by standard Chernoff bounds.\vspace{-5pt} 
\end{proof}

\noindent We prove a similar result for uninformed nodes, although the constants differ slightly and we discuss this below.\vspace{-5pt}

\begin{lemma}\label{lemma:node_request1}
Assume that  the request phase is unblocked. If at most $32\epsilon' n$ nodes are active, where $\epsilon' \leq 1/64$, then w.h.p. every node terminates by the end of the request phase.\vspace{0pt}
\end{lemma}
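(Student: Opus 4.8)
The plan is to follow the template of Lemma~\ref{lemma:Alice_request1}, but now tracking the noisy slots heard by a \emph{fixed} uninformed active node $p$ rather than by Alice, and then taking a union bound over all active nodes. Fix such a node $p$ and set $s = 2^{(b/2+1)i}$. A slot in which $p$ listens sounds noisy for one of two reasons: some other active node transmits a \texttt{nack}, or Carol jams. I would bound the expected contribution of each source separately and add them, then apply a Chernoff bound to show that the total stays below the termination threshold $5c\ln n$ with high probability, so that $p$ terminates.

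For the \texttt{nack} contribution, the key point (flagged in the statement by the parenthetical ``$p$ cannot hear its own transmissions'') is that the relevant senders are only the at most $32\epsilon' n - 1$ \emph{other} active nodes. Hence the probability that a given slot carries a \texttt{nack} audible to $p$ is at most $1-(1-1/n)^{32\epsilon' n}$, and by Fact~\ref{lemma:identity2} with $y=1/n$ this is at most $1-e^{-64\epsilon'}$; this is exactly where the factor $64$ in the node's listening probability and the hypotheses $32\epsilon' n$ and $\epsilon'\le 1/64$ enter. Letting $Y_j$ indicate that $p$ both listens in slot $j$ and hears such a \texttt{nack}, linearity of expectation gives $E[\sum_{j=1}^{s} Y_j] \le s\cdot(\text{listen prob})\cdot(1-e^{-64\epsilon'})$, where the $(1-e^{-64\epsilon'})$ cancels against the denominator of the listening probability so that, by the calibration of the listening probability, this term is $\Theta(c\ln n)$, exactly paralleling Lemma~\ref{lemma:Alice_request1}.

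For Carol's contribution I would invoke the unblocked hypothesis: Carol jams at most $(1-e^{-4\epsilon'})2^{(b/2+1)i}$ slots. Letting $Z_j$ indicate that $p$ listens in a slot that Carol jams, $E[\sum_j Z_j]$ is at most the number of jammed slots times $p$'s per-slot listening probability; since $1-e^{-4\epsilon'} < 1-e^{-64\epsilon'}$, the resulting ratio is strictly below one, so this term is $O(c\ln n)$ as well (indeed smaller than the base count after scaling). Summing the two sources, the expected number of noisy slots $p$ hears is at most roughly $2c\ln n$ — the slightly larger constants relative to Alice's case being precisely the ``constants differ slightly'' remark. Because this expectation is $\Omega(\ln n)$, a standard Chernoff bound shows the probability that $p$ hears more than $5c\ln n$ noisy slots is at most $n^{-\Omega(c)}$, so $p$ terminates; a union bound over the at most $n$ active nodes, with $c$ tuned large enough, gives the claim w.h.p.

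I expect the main obstacle to be the bookkeeping that keeps both noise terms safely below $5c\ln n$ with a constant-factor margin large enough to survive the union bound over up to $n$ nodes. Concretely one must simultaneously (i) exclude $p$'s own \texttt{nack} so that the audible-\texttt{nack} probability is controlled by $1-e^{-64\epsilon'}$, (ii) exploit $1-e^{-4\epsilon'}<1-e^{-64\epsilon'}$ so that Carol's full unblocked jamming budget contributes strictly less than the \texttt{nack} term after scaling, and (iii) verify that the combined expectation leaves enough slack below $5c\ln n$ for the Chernoff upper tail to beat $n^{-c}$ after the union bound. The delicate part is that these same constants must preserve the gap between the termination threshold here ($32\epsilon' n$) and the non-termination threshold of the companion lemma, so the choice of $\epsilon'\le 1/64$ and of the exponential factors is what has to be checked carefully.
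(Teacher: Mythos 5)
Your proposal follows the paper's own proof of this lemma essentially step for step: the same decomposition of noisy slots into those caused by other active nodes' \texttt{nack} transmissions and those caused by Carol, the same use of Fact~\ref{lemma:identity2} to bound the per-slot \texttt{nack} probability by $1-e^{-64\epsilon'}$ (which is exactly where the hypotheses $32\epsilon' n$ and $\epsilon'\le 1/64$ enter), the same pessimistic assumption that Carol's at most $(1-e^{-4\epsilon'})2^{(b/2+1)i}$ jammed slots are disjoint from the \texttt{nack} slots, and the same Chernoff-plus-union-bound finish. One quantitative point deserves care, and it is present in the paper's calculation as well as in yours: with the node's listening probability $\frac{c+1}{(1-e^{-64\epsilon'})2^{i}}$ over $s=2^{(b/2+1)i}$ slots, the cancellation you describe yields an expected count of $(c+1)\,2^{(b/2)i}$ for \emph{each} of the two noise sources, i.e., $\Theta(2^{(b/2)i})$, not the $\Theta(c\ln n)$ you assert ``by the calibration of the listening probability''; already at $i=3\lg\ln n$ with $b=1$ this is $\Theta(\ln^{3/2}n)$, which exceeds the fixed termination threshold $5c\ln n$, and it grows with $i$. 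The paper computes exactly this quantity and then compares it against $5c\ln n$ anyway, so you have faithfully reproduced the intended argument, mismatch included. For the Chernoff comparison against the threshold to close over the whole range $3\lg\ln n\le i\le \lg n+O(1)$, either the node's termination threshold must scale with $2^{(b/2)i}$ or its listening probability must be rescaled to mirror Alice's $\frac{c\ln n}{(1-e^{-4\epsilon'})2^{(b/2+1)i}}$; your closing checklist (items (i)--(iii)) is precisely where this needs to be verified rather than asserted.
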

\comment{
\begin{proof}
The probability that no uninformed node sends a \texttt{nack} in a particular slot is $(1-\frac{1}{n})^{32\epsilon'n}$. We use the fact that $(1-\frac{1}{n})^{32\epsilon'n}\geq e^{-64\epsilon'}$; therefore, the probability that a slot is noisy is $1 - (1-\frac{1}{n})^{64\epsilon'n}  \leq 1- e^{-64\epsilon'}$. 
Let $Y_j=1$ if slot $j$ is noisy due to a \texttt{nack} message by an uninformed node; otherwise, let $Y_j=0$.  The expected number of noisy slots due to uninformed nodes is at most $E[\sum_{1}^{s} Y_j] \leq \sum_1^{s}\frac{c+1}{(1-e^{-64\epsilon'})2^{i}}\cdot (1-e^{-64\epsilon'}) = (c+1)\,2^{(b/2)i}$. Pessimistically, assume that each of Carol's blocked slots occurs when none of the other uninformed nodes are sending a \texttt{nack} message. Let $Z_j=1$ if slot $j$ is noisy due to Carol jamming; otherwise, let $Z_j=0$. The expected number of jammed slots is at most $E[\sum_{1}^{s} Z_j] \leq \frac{c+1}{(1-e^{-64\epsilon'})2^{i}}\cdot (1-e^{-4\epsilon'}) 2^{(b/2 + 1)i} = c\,2^{(b/2)i} \frac{1-e^{-4\epsilon'}}{1-e^{-64\epsilon'}} \leq (c+1)\,2^{(b/2)i}$ for $\epsilon'\leq 1/64$. Therefore, the total expected number of noisy slots heard by a node $u$ is at most $2(c+1)\ln n$ and, by standard Chernoff bounds and union bounding over all $n$ nodes, the probability that $u$ hears more than $5c\ln n$ is at most $1/n^c$.
\end{proof}
}
\begin{lemma}\label{lemma:node_request2}
Assume at least $1024\epsilon' n$ nodes are active at the beginning of a request phase where $\epsilon' \leq 1/1024$. The, w.h.p., none of the uninformed nodes terminate in that request phase.\vspace{0pt}
\end{lemma}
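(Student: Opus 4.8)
The plan is to prove this as the mirror image of Lemma~\ref{lemma:Alice_request2} (Alice's non-termination), adapted to an uninformed node, and to flip the inequalities used in the (omitted) proof of Lemma~\ref{lemma:node_request1}. Fix an arbitrary active uninformed node $u$. A node terminates exactly when it hears at most $5c\ln n$ noisy slots, so the bad event for $u$ is that it hears at most $5c\ln n$ noisy slots and incorrectly terminates. I will bound this bad event by $n^{-(c+1)}$ and then union bound over the at most $n$ active uninformed nodes.

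First I would lower-bound the probability that a given slot is noisy, using only the \texttt{nack} traffic of the other active nodes. Since Carol cannot forge silence, her jamming can only convert additional silent slots into noisy ones; ignoring her behavior is therefore the pessimistic (worst) case and only strengthens the bound. With at least $1024\epsilon'n$ active nodes, at least $1024\epsilon'n-1$ nodes other than $u$ each send \texttt{nack} independently with probability $1/n$ (recall $u$ cannot hear its own transmission), so the probability a slot carries no \texttt{nack} is $(1-1/n)^{1024\epsilon'n-1}\le e^{-512\epsilon'}$ for $n$ large, using $1-1/n\le e^{-1/n}$ and absorbing the single inaudible transmission. Hence a slot is noisy with probability at least $1-e^{-512\epsilon'}$. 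Summing over the $2^{(b/2+1)i}$ slots of the phase, the expected number of noisy slots that $u$ hears is at least
\[
E[Y]\;\ge\;2^{(b/2+1)i}\cdot\frac{c+1}{(1-e^{-64\epsilon'})\,2^{i}}\cdot\bigl(1-e^{-512\epsilon'}\bigr)
\;=\;(c+1)\,2^{(b/2)i}\cdot\frac{1-e^{-512\epsilon'}}{1-e^{-64\epsilon'}}.
\]
The quantitative heart of the argument is that the factor-$32$ gap between $32\epsilon'n$ (the termination regime of Lemma~\ref{lemma:node_request1}) and $1024\epsilon'n$ makes the ratio $\tfrac{1-e^{-512\epsilon'}}{1-e^{-64\epsilon'}}$ bounded below by a constant strictly greater than $1$ (it tends to roughly $8$ as $\epsilon'\to0$), so $E[Y]$ overshoots the termination budget by a constant multiplicative margin; moreover, since $i\ge 3\lg\ln n$ and $b=1$ give $2^{(b/2)i}=\omega(\ln n)$, this margin is in fact super-logarithmic in the relevant range of $i$.

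Finally I would apply a Chernoff bound (the standard multiplicative form, or Theorem~\ref{theorem:concentration}) to the indicator sum $Y$ of noisy slots heard by $u$. Because $E[Y]$ exceeds $5c\ln n$ by a fixed factor and is itself $\Omega(\ln n)$, the probability that $Y\le 5c\ln n$ is at most $n^{-(c+1)}$ for a suitably tuned constant $c$; a union bound over the at most $n$ active uninformed nodes then shows that w.h.p.\ every active uninformed node hears more than $5c\ln n$ noisy slots and so does not terminate. The main obstacle is precisely the calibration of constants: I must guarantee that the expected count clears the threshold $5c\ln n$ by enough of a multiplicative margin that the Chernoff tail survives the union bound over all $n$ nodes, and it is exactly the $32\times$ separation between the two active-node thresholds that buys this margin. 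The secondary care-points are the off-by-one from $u$'s own inaudible \texttt{nack} and the verification that treating Carol's jamming as absent is genuinely the worst case for non-termination.
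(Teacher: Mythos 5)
Your proposal is correct and follows essentially the same route as the paper's (omitted) proof: ignore Carol since silence cannot be forged, lower-bound the per-slot noise probability from the \texttt{nack} traffic of the other active nodes, observe that $E[Y]=\Omega(2^{(b/2)i})=\omega(\ln n)$ comfortably clears the $5c\ln n$ threshold, and finish with a Chernoff bound plus a union bound over all nodes. The only cosmetic difference is that you handle the ``cannot hear its own transmission'' issue by excluding $u$ from the sender count (yielding the constant $512$), whereas the paper keeps all $1024\epsilon'n$ senders and multiplies by a separate $(1-1/n)$ factor.
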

\comment{
\begin{proof}
The bad event occurs if the number of noisy slots that an uninformed node $u$ detects is less than $5c\ln n$. Since Carol cannot forge silence, we do not consider her behavior here. The probability of a noisy slot is at most $1-(1-\frac{1}{n})^{1024\epsilon'n} \geq 1 - e^{-1024\epsilon'}$. Let $Y_j=1$ if $u$ hears a noisy slot; otherwise, let $Y_j=0$. Note that $Pr(Y_j=1) \geq (1-\frac{1}{n})(\frac{c+1}{(1-e^{-64\epsilon})2^i})(1 - e^{-1024\epsilon'})$. Letting $Y=\sum_1^s Y_j$, we have $E[Y] \geq \sum_1^{s}(1-\frac{1}{n})\,\frac{c+1}{(1-e^{-64\epsilon'})2^i}\,(1-e^{-1024\epsilon'})\geq 10(c+1)\,2^{(b/2)i}$ for  $s=2^{(b/2+1)i}$ and any $\epsilon' \leq 1/1024$. Therefore, by standard Chernoff bounds and union bounding over all $n$ nodes, the probability that any uninformed node sees at most $5c\ln n$ is at most $1/n^{c}$. 
\end{proof}
 }
\noindent Critically, Alice should  only terminate  after the correct nodes terminate and, therefore,  our algorithm is designed in the following way.  If uninformed nodes are guaranteed w.h.p. to be active (a threshold of $1024\epsilon'n$ active nodes), then certainly Alice is guaranteed w.h.p. to be active (a threshold of $32\epsilon'n$ active nodes). Conversely, if Alice is guaranteed w.h.p. to have terminated  (a threshold of $2\epsilon'n$ active nodes), then the nodes are guaranteed  w.h.p. to have already terminated (a threshold of $32\epsilon'n$ active nodes).

To summarize the implications of these results, there are two bad situations: (1) if Alice or correct nodes can be tricked into perpetually executing the protocol at little cost to Carol, and (2) if Carol can cause Alice and all nodes to terminate with a large fraction of uninformed nodes. We have shown that (1) to keep Alice or nodes executing the protocol past their termination condition requires Carol to jam $\Omega(2^{(b/2+1)i})$ slots, and (2) w.h.p. Carol cannot force a large number of nodes to terminate without $m$.

\subsection{Correctness \& Resource Competitiveness}\label{section:correctness}\vspace{-1pt} 

The remainder of our analysis proceeds as follows. First, we show that if no blocked phases occur, then at least $(1-\epsilon')n$ nodes receive $m$. Second, when blocking phases do occur, we provide results on resource competitiveness. Finally, we prove that eventually a round is encountered where blocking must stop; consequently, at least $(1-\epsilon')n$ nodes become informed.\vspace{\myval}  

\begin{lemma}\label{lemma:no_blocking_cost}
Assume that there are no blocked phases in some round $i \geq 3\lg\ln n$ and at least $\epsilon' n$ nodes are active at the beginning of this round for any constant $\epsilon'>0$. Then, w.h.p., all correct nodes are informed and terminate.\vspace{\myval} 
\end{lemma}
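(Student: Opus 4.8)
The plan is to chain the three per-phase guarantees already established for an unblocked round, feeding the output of each phase into the next. Since the hypothesis rules out any blocked phase in round $i$, the inform, propagation, and request phases are all unblocked. The first thing I would pin down is that, at the start of round $i$, every active correct node is in fact uninformed: by the protocol an informed node sends in its propagation phase and then terminates at the end of that phase, so no previously informed node survives into round $i$. Hence the hypothesis ``at least $\epsilon' n$ active nodes'' is exactly the hypothesis ``at least $\epsilon' n$ uninformed and active nodes'' required by Lemmas~\ref{lemma:inform1} and~\ref{lemma:inform2}, and more importantly by Lemma~\ref{lemma:inform_phase2}.

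Next I would restrict attention to the operating range $3\lg\ln n \le i \le \lg n + O(1)$ (the upper limit being the natural bound on the protocol's running time). Recalling that $b=1$ is fixed later in the analysis, we have $(3/b)\lg\ln n = 3\lg\ln n$, so the lemma's hypothesis $i \ge 3\lg\ln n$ puts $i$ squarely in the admissible range of Lemma~\ref{lemma:inform_phase2}. Since both the inform and propagation phases are unblocked, that lemma applies and yields, w.h.p., that every correct node is informed by the end of the propagation phase. By the protocol each informed node terminates at the end of the propagation phase in which it became informed; thus w.h.p. all correct nodes are simultaneously informed and terminated \emph{before} the request phase of round $i$ even begins. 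In particular, node termination here happens as an informed node at the end of the propagation phase, not through the request-phase noisy-slot condition, so Lemma~\ref{lemma:node_request1} is not needed for this case.

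It remains to terminate Alice. With all correct nodes already terminated, the request phase of round $i$ opens with zero active uninformed nodes, hence in particular at most $2\epsilon' n$; since the request phase is unblocked, Lemma~\ref{lemma:Alice_request1} gives that Alice (correctly) terminates w.h.p. Taking a union bound over the constantly many high-probability events invoked above completes the argument.

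The genuine delicacy is not any single estimate---those are all discharged by the cited lemmas---but the bookkeeping of the \emph{order} of events. I must verify that the correct nodes terminate (as informed nodes, at the close of the propagation phase) strictly before Alice terminates (in the request phase), which is precisely the ``Alice terminates only after the correct nodes'' requirement discussed after Lemma~\ref{lemma:node_request1}. Confirming that ``active'' coincides with ``uninformed'' at round boundaries, and that the single value $b=1$ makes the index range $i\ge 3\lg\ln n$ compatible with all three cited lemmas at once, are the points that warrant care.
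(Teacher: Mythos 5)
Your proof is correct and follows essentially the same route as the paper's: both chain Lemmas~\ref{lemma:inform1}, \ref{lemma:inform2}, and \ref{lemma:inform_phase2} through the unblocked inform and propagation phases and then use the unblocked request phase to conclude that Alice terminates via Lemma~\ref{lemma:Alice_request1}. The only (harmless) divergence is that the paper additionally cites Lemmas~\ref{lemma:node_request1} and~\ref{lemma:node_request2} to terminate the nodes, whereas you observe that informed nodes already terminate at the close of the propagation phase, which is a valid reading of the protocol.
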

\begin{proof}
Since the inform phase of round $i$ is not blocked, Lemmas~\ref{lemma:inform1} and~\ref{lemma:inform2} guarantee w.h.p. the creation of an $S_i$ of appropriate size. Since the propagation phase is not blocked, Lemma~\ref{lemma:inform_phase2} guarantees w.h.p. that all remaining active nodes receive $m$. Then, since the request phase is not blocked, Lemmas~\ref{lemma:Alice_request1} and~\ref{lemma:Alice_request2} guarantee w.h.p. that Alice terminates and Lemmas~\ref{lemma:node_request1} and~\ref{lemma:node_request2} guarantee that all nodes terminate.\vspace{\myval} 
\end{proof}

\noindent Lemma~\ref{lemma:no_blocking_cost} proves correctness in the absence of blocked phases; however, it is not yet apparent how the protocol may result in an small fraction of terminated, but uninformed, nodes. The critical observation is that we require $\epsilon' n$ active uninformed nodes at the beginning of the inform phase in order for Lemma~\ref{lemma:no_blocking_cost} to hold. 

Note that by blocking a propagation phase, an $n$-uniform  Carol may allow $2\epsilon'n$ nodes to remain uninformed and active. By Lemma~\ref{lemma:Alice_request1} and~\ref{lemma:node_request1}, Alice and all nodes then terminate.  Or Carol might block a propagate phase and let all but $32\epsilon' n$ nodes become informed; in this case, all nodes terminate with $32\epsilon' n$ uninformed. Critically, when Carol blocks an inform or propagate phase, she  decides how many nodes receive $m$ since she  is a $n$-uniform adversary; this illustrates the challenges posed by a $n$-uniform adversary.  We now analyze  resource competitiveness and begin by stating the costs when no blocked phases ever occur:\vspace{\myval} 

\begin{lemma}\label{lemma:no_block_costs}
Assume there are never any blocked phases. Then the cost to Alice is $O(\log^{3a+1}n)$ and the cost to each node is  $O(\log^{(3/2)b}n)$.\vspace{\myval} 
\end{lemma}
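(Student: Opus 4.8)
The plan is to locate the round in which the protocol terminates and then sum the per-round costs, which form a geometric series dominated by the final round. First I would argue that, absent any blocked phases, the protocol terminates by round $i^{*} = \lceil 3\lg\ln n\rceil$, the first round in the range $3\lg\ln n \leq i \leq \lg n + O(1)$ for which the preceding lemmas apply. Consider the state at the beginning of round $i^{*}$. If at least $\epsilon' n$ nodes are still active, then since no phase is blocked, Lemma~\ref{lemma:no_blocking_cost} guarantees w.h.p.\ that every correct node becomes informed and that both Alice and all nodes terminate within this round. Otherwise fewer than $\epsilon' n < 2\epsilon' n$ nodes are active, so Lemma~\ref{lemma:Alice_request1} forces Alice to terminate and Lemma~\ref{lemma:node_request1} forces every node to terminate in the (unblocked) request phase. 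In either case, w.h.p.\ no participant executes a round beyond $i^{*}$, so it suffices to bound the total cost accrued over rounds $1,\dots,i^{*}$.

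Next I would tally the expected per-round cost directly from the pseudocode in Figure~\ref{fig:pseudocode1}. In round $i$, Alice sends in the inform phase an expected $2^{(a+b)i}\cdot \frac{2\ln n}{2^{bi}} = O(2^{ai}\ln n)$ times, performs no operation in the propagation phase, and listens in the request phase an expected $2^{(b/2+1)i}\cdot \frac{c\ln n}{(1-e^{-4\epsilon'})2^{(b/2+1)i}} = O(\ln n)$ times; thus her round-$i$ cost is $O(2^{ai}\ln n)$. Each node that remains uninformed and active through round $i$ listens an expected $O(2^{(b/2)i})$ times in each of the inform, propagation, and request phases (the send probabilities of $1/n$ contribute only $O(2^{(a+b)i}/n)$, which is negligible over the admissible range of $i$), so its round-$i$ cost is $O(2^{(b/2)i})$.

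Then I would sum over rounds. Both series $\sum_{i=1}^{i^{*}}2^{ai}\ln n$ and $\sum_{i=1}^{i^{*}}2^{(b/2)i}$ are geometric with constant ratio exceeding one, hence dominated to within a constant factor by their final term. Substituting $i^{*}=3\lg\ln n$ gives $2^{ai^{*}}=(\ln n)^{3a}$ and $2^{(b/2)i^{*}}=(\ln n)^{(3/2)b}$, so Alice's total cost is $O((\ln n)^{3a}\cdot\ln n)=O(\log^{3a+1}n)$ and each node's total cost is $O(\log^{(3/2)b}n)$. Finally, since for $i\geq 3\lg\ln n$ (using $b=1$, established later) every expectation above is $\Omega(\log n)$, a standard Chernoff bound shows the realized costs lie within a constant factor of their expectations w.h.p., which yields the claimed high-probability bounds.

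The main obstacle is the termination argument of the first paragraph: the cost calculation itself is a routine geometric sum, but pinning down that the protocol halts by round $\Theta(\lg\ln n)$ requires the case split on the number of active nodes together with careful invocation of the correct request-phase lemmas, ensuring in particular that the thresholds $2\epsilon' n$ and $32\epsilon' n$ are consistent so that Alice does not terminate before the nodes do. One must also confirm that rounds $i<3\lg\ln n$, for which the concentration lemmas do not apply, contribute only lower-order cost and cannot extend the execution; both facts follow because their contribution is absorbed into the same geometric series.
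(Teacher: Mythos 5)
Your proposal is correct and follows essentially the same route as the paper: invoke Lemma~\ref{lemma:no_blocking_cost} to conclude termination by round $3\lg\ln n$, observe that the geometrically growing round lengths make the final round dominate, and read off the per-phase expected costs from the pseudocode. Your extra case split on fewer than $\epsilon'n$ active nodes and the explicit Chernoff remark are minor refinements the paper leaves implicit, not a different argument.
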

\begin{proof}
Given no blocked phases,  Lemma~\ref{lemma:no_blocking_cost} guarantees w.h.p. that all nodes become informed and terminate by round $i=3\lg\ln n$ and, given that round length increases geometrically with $i$, the costs in this round dominates that of the earlier rounds.  In the inform phase, Alice's cost is $O(\log^{3a+1}n)$ and each node's cost is $O(\log^{(3/2)b}n)$. In the propagation phase  Alice is inactive while each node in $S_1$ incurs a cost of $\tilde{O}(1)$, and each uninformed node incurs a cost of $O(\log^{(3/2)b}n)$.  In the request phase,  Alice's cost is $O(\log n)$ and each node's cost is $O(\log^{(3/2)b}n)$. Summing the costs yields the claim.\vspace{\myval} 
\end{proof}

\begin{lemma}\label{lemma:cost_ratios}
Assume that Carol spends $\Cost$ over the execution of~\AlgL~and at least one phase is blocked. Then, w.h.p the cost to Alice is $\max\{\tilde{O}(\Cost^{a/(a+b)}), \tilde{O}(\Cost^{a/(b/2 + 1)})\} $ and the cost to any node is $\max\{O(\Cost^{b/2(a+b)}), O(\Cost^{(b/2)/(b/2 + 1)})\}$.\vspace{-1pt} 
\end{lemma}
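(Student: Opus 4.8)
The plan is to bound the total cost incurred by Alice and by each node in terms of the index $i^*$ of the \emph{final} round the protocol reaches, and then to lower bound Carol's jamming cost $\Cost$ as a function of $i^*$. First I would record the per-round costs, reusing the calculations behind Lemma~\ref{lemma:no_block_costs}. In round $i$ Alice's expected cost is dominated by the inform phase and equals $O(2^{ai}\ln n)=\tilde{O}(2^{ai})$ (the request phase contributes only $O(\ln n)$ and she is silent during propagation), while each node's expected cost in each of the three phases is $O(2^{(b/2)i})$, so its round-$i$ cost is $O(2^{(b/2)i})$. Since both $2^{ai}$ and $2^{(b/2)i}$ grow geometrically in $i$, summing over rounds $1,\dots,i^*$ gives total costs $\tilde{O}(2^{ai^*})$ for Alice and $O(2^{(b/2)i^*})$ for each node; the last round dominates. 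Because the relevant expectations are $\Omega(\ln n)$ for $i=\Omega(\log\log n)$, a Chernoff bound makes these totals hold w.h.p.

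The second step ties $i^*$ to $\Cost$ through the blocking structure. Let $j$ be the highest-index round in which some phase is blocked; by assumption such a $j$ exists. I claim $j\ge i^*-O(1)$: if every phase in rounds $j+1,\dots,i^*$ were unblocked, then Lemma~\ref{lemma:no_blocking_cost} (when at least $\epsilon' n$ nodes are active) forces all nodes to become informed and terminate in round $j+1$, and when fewer than $\epsilon' n$ nodes remain active the unblocked request-phase termination conditions (Lemmas~\ref{lemma:Alice_request1} and~\ref{lemma:node_request1}) likewise force termination within a constant number of additional rounds. Hence $2^{i^*}=O(2^{j})$.

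The third step lower-bounds $\Cost$ by the cost of the blocking that occurs in round $j$, splitting on which phase was blocked. If an inform or propagation phase was blocked, then Carol jammed more than half of its $2^{(a+b)j}$ slots, so $\Cost=\Omega(2^{(a+b)j})$, giving $2^{i^*}=O(\Cost^{1/(a+b)})$ and hence Alice's cost $\tilde{O}(\Cost^{a/(a+b)})$ and each node's cost $O(\Cost^{(b/2)/(a+b)})$. If instead only a request phase was blocked, then by its blocking threshold Carol jammed $\Omega(2^{(b/2+1)j})$ slots, so $\Cost=\Omega(2^{(b/2+1)j})$, giving $2^{i^*}=O(\Cost^{1/(b/2+1)})$ and hence Alice's cost $\tilde{O}(\Cost^{a/(b/2+1)})$ and each node's cost $O(\Cost^{(b/2)/(b/2+1)})$. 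Since we do not control which of these two possibilities Carol chooses, taking the maximum over the two cases yields exactly the claimed bounds.

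I expect the main obstacle to be the middle step---rigorously arguing $i^*=j+O(1)$ despite the \emph{almost-everywhere} behaviour of the protocol. The complication is that Carol, being $n$-uniform, can block a propagation phase so as to leave a small band (between $2\epsilon' n$ and $32\epsilon' n$) of active, uninformed nodes, and the termination thresholds in Lemmas~\ref{lemma:Alice_request1}--\ref{lemma:node_request2} differ by constant factors. I must verify that none of these scenarios lets the protocol survive more than a constant number of unblocked rounds past round $j$, so that the additive slack in the round index---equivalently, the constant factor in $2^{i^*}$---is absorbed into the $\tilde{O}$/$O$ notation and does not perturb the exponents.
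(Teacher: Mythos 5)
Your proposal is correct and follows essentially the same route as the paper's proof: identify the last blocked round, split on whether the blocked phase was an inform/propagation phase (threshold $\Omega(2^{(a+b)j})$) or a request phase (threshold $\Omega(2^{(b/2+1)j})$), lower-bound $\Cost$ accordingly, and observe that the geometrically growing per-round costs make the final round dominate Alice's $\tilde{O}(2^{ai^*})$ and each node's $O(2^{(b/2)i^*})$ totals. The paper phrases this as two adversarial ``strategies'' (blocking inform/propagation vs.\ blocking request phases) rather than via a single last-blocked-round index $j$, and it is if anything less explicit than you are about why termination follows within $O(1)$ rounds of the last blocked phase, so your treatment of that step is a faithful (and slightly more careful) rendering of the same argument.
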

\begin{proof}
We analyze Alice and correct nodes separately:\smallskip 

\noindent{\it Cost for Alice:} There are two strategies by which Carol can prevent Alice from terminating. The first strategy is where Carol  blocks during at least one of the inform or propagation phases in each round. In this case, let $r$ be the first round where both the inform phase and propagation phase are not blocked. Then the cost to Carol is $\Cost=\Omega(2^{(a+b)r})$. Here, the cost to Alice is dominated by the cost of the next (and last) round since cost increases geometrically; this cost is $O(2^{ar}\ln n) =  O\left(\Cost^{a/(a+b)}\ln n\right)$. The second strategy occurs when  Carol blocks the request phase in order to trick Alice into believing that at least $32\epsilon' n$ nodes remain uninformed. Let $r'>r$ be the first round where Carol does not block the request phase. Note that $r'>r$ since it does Carol no good to block the request phase if the inform or propagate phases were already blocked in the round. Then, Carol's cost is $\Omega(2^{(b/2 + 1)r'})$ while the cost to Alice is $O(\Cost^{\frac{a}{b/2 + 1}}\ln n)$ since she will proceed into the next (and, w.h.p., final) round.

\smallskip 
 
\noindent{\it Cost for a Node:} There are two strategies by which Carol might prevent a node from terminating. The first strategy is where Carol  blocks at least one of the inform or propagation phases in each round. Let $r$ be the first round where this does not occur. Then, the cost to Carol is $\Cost=\Omega(2^{(a+b)r})$ while the cost to each node is $O(\Cost^{b/(2(a+b))})$. The second strategy occurs when  Carol blocks the request phase in order to trick the informed nodes into believing that at least $1024\epsilon' n$ nodes remain uninformed.  Let $r'>r$ be the first round where Carol does not block the request phase. Carol's cost is $\Omega(2^{(b/2+1)r'})$ while the cost to a node is $O(\Cost^{(b/2)/(b/2 + 1)})$ since the node will proceed into the next (and final) round.\vspace{\myval}  
\end{proof}

 
\noindent We now state our final result for $k=2$:

\begin{lemma}\label{lemma:competitive_costs}
Assume a sender Alice with a budget of at most $C\,n^{1/2}\ln n$ who aims to deliver a message $m$ to $n$ correct nodes. Assume an adaptive adversary Carol with an individual budget of $C\,n^{1/2}\ln n$ who controls an additional $n$  Byzantine nodes. Each correct and  Byzantine node possesses a budget of $C\,n^{1/2}$. Then, w.h.p., \AlgL~guarantees the following properties:\vspace{-2pt}
\begin{itemize}

\item If Carol and her  Byzantine nodes jam for $\Cost$ slots, then Alice and each correct node terminates with an individual cost of $\tilde{O}(\Cost^{\frac{1}{3}} + 1)$.\vspace{-3pt}

\item At least $(1-\epsilon)\,n$ correct correct nodes receive $m$ for an arbitrarily small  constant $\epsilon>0$ within $O(n^{3/2})$ slots.

\end{itemize}   
\end{lemma}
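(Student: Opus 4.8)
The plan is to instantiate the free parameters $a$ and $b$ so that both load balancing and the $\tilde{O}(\Cost^{1/3})$ resource-competitive ratio fall out of the cost expressions already derived, and then to use a budget-exhaustion argument to bound the running time and the number of nodes that terminate uninformed.

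First I would fix $a$ and $b$. Setting the four exponents in Lemma~\ref{lemma:cost_ratios} equal to $1/3$ yields a consistent system: $a/(a+b) = b/(2(a+b)) = 1/3$ forces $b = 2a$, and then $a/(b/2+1) = (b/2)/(b/2+1) = 1/3$ forces $a = 1/2$, $b = 1$. These values respect the self-evident bounds $a \le 1/2$, $b \le 1$ from the discussion, and they match the budgets exactly: Alice's dominant (round $i=\lg n$) inform-phase cost is $\Theta(n^{a}\ln n)=\Theta(n^{1/2}\ln n)$ and each node's dominant cost is $\Theta(n^{b/2})=\Theta(n^{1/2})$, within the stated budgets $Cn^{1/2}\ln n$ and $Cn^{1/2}$ once $C$ is chosen large enough to subsume the protocol constants and $1/\epsilon'$.

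Second, substituting $a=1/2$, $b=1$ into Lemma~\ref{lemma:cost_ratios} gives, whenever at least one phase is blocked, an individual cost of $\tilde{O}(\Cost^{1/3})$ for Alice and $O(\Cost^{1/3})$ for each node; since both of Carol's strategies in that lemma collapse to the same exponent, these costs are load balanced up to logarithmic factors. When no phase is ever blocked, Lemma~\ref{lemma:no_block_costs} bounds the costs by $O(\log^{5/2}n)$ and $O(\log^{3/2}n)$, i.e. $\tilde{O}(1)$. Combining the two regimes yields the claimed $\tilde{O}(\Cost^{1/3}+1)$ bound for both Alice and each correct node.

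Third I would establish termination and the informed-fraction guarantee via Carol's finite budget. Carol together with her $n$ Byzantine nodes has total jamming budget $Cn^{1/2}\ln n + n\cdot Cn^{1/2} = O(n^{3/2})$. Because $a+b = b/2+1 = 3/2$, blocking any phase in round $i$ costs $\Omega(2^{(3/2)i})$, and this cost grows geometrically, so blocking through round $i$ is dominated by $\Omega(2^{(3/2)i})$; for some $i^\star \le \lg n + O(1)$ this exceeds $O(n^{3/2})$ and Carol can no longer block, which is precisely the natural running-time limit built into the range of $i$ in the earlier lemmas. At round $i^\star$ there are two cases. If at least $\epsilon' n$ nodes remain active, Lemma~\ref{lemma:no_blocking_cost} informs and terminates every remaining active node. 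Otherwise the active count has already fallen below threshold; but by Lemmas~\ref{lemma:node_request1} and~\ref{lemma:node_request2} an uninformed node terminates only once the number of active nodes has dropped to $O(\epsilon' n)$, so at most $O(\epsilon' n)$ nodes ever terminate uninformed. In either case at least $(1-O(\epsilon'))n$ correct nodes receive $m$, and renormalizing by choosing $\epsilon'$ a suitable constant fraction of $\epsilon$ delivers the $(1-\epsilon)n$ guarantee. The running time is dominated by the three phases of round $i^\star$, totaling $O(2^{(3/2)i^\star}) = O(n^{3/2})$ slots. The main obstacle I expect is this third step: coupling Carol's aggregate budget to a clean bound on $i^\star$ while simultaneously controlling the uninformed set. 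The delicacy is that an $n$-uniform Carol dictates exactly which nodes receive $m$ in a blocked phase, so I must rule out both failure modes at once — perpetual but finitely-funded blocking that drains Alice and the nodes, and premature termination of a large uninformed set — and then union-bound the per-round w.h.p. events over the $O(\log n)$ rounds.
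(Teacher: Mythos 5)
Your proposal is correct and follows essentially the same route as the paper's proof: equalize the exponents from Lemma~\ref{lemma:cost_ratios} to pin down $a=1/2$, $b=1$, combine with Lemma~\ref{lemma:no_block_costs} for the $\tilde{O}(\Cost^{1/3}+1)$ bound, and then use the geometric growth of round lengths against Carol's aggregate $O(n^{3/2})$ budget to reach an unblockable round $i^\star=\lg n+O(1)$ where Lemma~\ref{lemma:no_blocking_cost} applies, with the uninformed fraction controlled by the request-phase thresholds and renormalization of $\epsilon'$. The only detail the paper spells out more carefully is the verification that Alice and each node survive to round $i^\star$ within budget (choosing $C$ large relative to the protocol constant $d$ and the blocking fraction $\beta$), which you note only in passing.
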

\begin{proof}
The worst-case resource-competitive ratios for Alice and the nodes should be equal. Lemma~\ref{lemma:cost_ratios} tells us that the exponents of interest for Alice are $a/(a+b)$ and $a/(b/2 + 1)$. Similarly, in the case of each node, the exponents of interest are $b/(2(a+b))$ and $(b/2)(b/2+1)$. Since we can choose $a\leq 1/2$ and $b\leq 1$, we can simplify Alice's maximum cost by setting $a/(a+b) = a/(b/2 + 1)$ and deriving the relationship $a + b(1-(1/2)) = 1$ (or, in general, $a + b(1-(1/k)) = 1$). Then we can relate this to a node's cost by setting $a/(a+b) = b/(2(a+b))$ and $a/(b/2+1) = (b/2)/(b/2+1)$ yields $a=b/2$ (or, for general $k$, $a = b/k$). Therefore, a common solution is $b=1$ and $a=1/2$ (or, in general, $a=1/k$). This yields a load-balanced solution where the cost to Alice is $O(\Cost^{1/3}\ln n + \ln^{5/2}n)$ and the cost to each node of $O(\Cost^{1/3} + \ln^{5/2}n)$ where the second cost term in each cost function follows from Lemma~\ref{lemma:no_block_costs}.

 
We now prove that the budgets of Alice and each node are sufficient to guarantee the claimed properties. When executing \AlgL, there exists some constant $d>0$ such that the  cost to each node in round $i$ is at most $d\, 2^{(b/2)i} = d\,2^{i/2}$; note that $d$ depends on the parameters $\epsilon$, $c$, and $k$ in our protocol. Recall that a blocked send, propagation, and request phase are defined slightly differently in terms of the constant fraction of slots jammed. Therefore, to simplify the analysis, redefine a blocked phase in round $i$ as one where more than $\beta\,2^{(3/2)i}$ slots are jammed for $0<\beta<1$; any positive constant in the range $(0,1)$ will yield the same resource competitive result asymptotically.

Each of the $t$ Byzantine nodes has a budget of $C\,n^{1/2}$; therefore, Carol and her Byzantine nodes possess a combined budget of $C\,f\,n^{3/2}$ $+ Cn^{1/2}\ln n \leq$ $C(f+1)n^{3/2}$. Therefore, using this budget, Carol cannot  block a send, propagation, or request  phase consisting of $(C/\beta)(f+1)\,n^{3/2}$ slots or more. Solving for $2^{(3/2)i} = (C/\beta)\,(f+1)\,n^{3/2}$ implies this occurs in round $i=\lg n + \frac{2}{3}\lg((C/\beta)(f+1))$. 

The cost to each correct node for executing \AlgL~in this round $i=\lg n + \frac{2}{3}\lg(2(C/\beta)(f+1))$ is at most $d\,2^{i/2} = d\,((C/\beta)(f+1))^{1/3}\,n^{1/2}$. We must take into account previous rounds, but given the doubling of length per round, the total cost up to this round is at most $2d\,((C/\beta)(f+1))^{1/3}\,n^{1/2}$. Therefore, so long as $C\geq (2d)^{3/2}\cdot ((f+1)/\beta)^{1/2}$, w.h.p. a correct node does not exceed its budget. By an almost identical argument, w.h.p., Alice does not exceed her budget of $C\,n^{1/2}\ln n$. Therefore, for $C$ sufficiently large, Alice and the correct nodes are guaranteed to reach a round where there are no blocked phases and, therefore, Lemma~\ref{lemma:no_blocking_cost} guarantees that at least $(1-\epsilon)n$ nodes are informed and terminate; the $\epsilon$-fraction that might terminate without $m$  follows from our observations about an $n$-uniform adversary. Given the doubling of the number of slots in each round,  this last phase occurs within $O(n^{3/2})$ slots. \vspace{\myval}  
\end{proof}

\noindent Clearly, no algorithm can disseminate a value from sender to {\it any} receivers in $o(n^{3/2})$ slots given that the total budget of Carol and her Byzantine nodes allow for the channel to be jammed continuously for that length of time. The following corollary is immediate:\vspace{\myval} 

\begin{corollary}
The latency for \AlgL~is asymptotically optimal.\vspace{\myval} 
\end{corollary}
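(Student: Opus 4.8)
The plan is to establish matching upper and lower bounds of $\Theta(n^{3/2})$ on the latency. The upper bound is already in hand: Lemma~\ref{lemma:competitive_costs} guarantees that \AlgL~delivers $m$ to at least $(1-\epsilon)n$ correct nodes and terminates within $O(n^{3/2})$ slots, w.h.p. It therefore remains only to argue a matching $\Omega(n^{3/2})$ lower bound that applies to \emph{every} algorithm, which is exactly the content of the remark preceding the corollary.

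For the lower bound, I would first tally Carol's aggregate jamming budget. As in Lemma~\ref{lemma:competitive_costs}, each of the $n$ Byzantine nodes has budget $C\,n^{1/2}$ and Carol herself has budget $C\,n^{1/2}\ln n$, so their combined budget is $\Theta(n^{3/2})$. The adversary strategy I would use is the simplest possible: jam the single channel in every slot until the budget is exhausted. This is an oblivious schedule requiring neither adaptivity nor reactivity, so the lower bound holds even against the weakest adversary and no model subtleties arise.

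The core of the argument is to invoke the model assumptions to show that no correct node can obtain $m$ during this jammed prefix. By hypothesis, jamming is indistinguishable from a legitimate collision, is detectable only at the receiving end, and causes any received data to be discarded; moreover, silence cannot be forged, so the adversary genuinely occupies the channel throughout. Hence, in any slot Carol jams, a listening node hears only noise and learns nothing about $m$, while a sleeping node trivially learns nothing. Consequently no correct node can decode $m$ until the first unjammed slot, which cannot occur until Carol has spent her $\Theta(n^{3/2})$ budget. Thus any algorithm requires $\Omega(n^{3/2})$ slots before even a single receiver is informed, and a fortiori before $(1-\epsilon)n$ receivers are. Combining this with the $O(n^{3/2})$ upper bound gives latency $\Theta(n^{3/2})$, so \AlgL~is asymptotically optimal.

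I expect the only delicate point to be the formalization of the indistinguishability step: one must argue that, from the viewpoint of any correct node, the execution under continuous jamming is statistically identical to one in which $m$ was never transmitted, so that no protocol (even randomized or adaptive in its listening pattern) can extract information from the jammed slots. This follows directly from the stated model, but it is the sole place where care is needed, since everything else reduces to the budget computation and the already-established termination time of \AlgL.
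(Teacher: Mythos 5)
Your argument is correct and is essentially the paper's own: the $O(n^{3/2})$ upper bound is taken from Lemma~\ref{lemma:competitive_costs}, and the matching $\Omega(n^{3/2})$ lower bound follows because Carol and her Byzantine nodes have an aggregate budget of $\Theta(n^{3/2})$ and can therefore jam every slot of that prefix, during which no correct node can receive $m$. The paper states this in a single sentence preceding the corollary; your proposal just fills in the same reasoning in more detail.
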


\noindent Finally, we note that, in practice, each node may start with $i=1$ (or any other agreed upon constant) and run until at least its respective estimate of $d\lg\ln n$ is reached before terminating for some constant $d\geq 3$. That is, there is no need to start at exactly round $i=3\lg\ln n$; indeed, nodes may not agree on such a value and we discuss this further in Section~\ref{section:size}.


\section{The General Case}\label{subsection:general_k}\vspace{0pt}

For general $k$, it is not sufficient to simply replace $1/2$ by some function, say $(k-1)/k$, in our analysis since doing so results in a w.h.p. cost of $O(n^{(k-1)/k})$ rather than the desired $O(n^{1/k})$.  Instead, the propagation of $m$ must be extended in a non-trivial fashion by repeating the propagation phase $k-1$ times. For a fixed  round $i\in \Omega(\lg\ln n)$, we use these repeated propagation phases to prove the existence of sets of nodes $S_{i,h}$ where $h=1, ..., k-1$. The inform phase remains unchanged and results in the creation of the set $S_{i, 1}$ consisting of $\Theta(\frac{n\ln^{k-1} n}{2^{(1-1/k)i}})$ newly informed nodes. In turn, the propagation phase utilizes $S_{i,1}$ to guarantee the creation of $S_{i,2}$ which consists of  $\Theta(\frac{n\ln^{k-2} n}{2^{(1-2/k)i}})$ newly informed nodes. In general, throughout step $h$ of the propagation phase, the existing set $S_{i,h}$ of size   $\Theta(\frac{n\ln^{k-h} n}{2^{(1-h/k)i}})$  is used to create the new set $S_{i,h+1}$ of size  $\Theta(\frac{n\ln^{k-h-1} n}{2^{(1-(h+1)/k)i}})$ or larger. Therefore, by at least step $h=k-1$, the set $S_{i,k-1}$ contains   $\Theta(\frac{n\ln n}{2^{i/k}})$ informed nodes which  ensures that all remaining uninformed nodes can receive $m$ if no step in creating $S_{i,h}$ is blocked. Our pseudocode for general $k$ is given in Figure~\ref{fig:pseudocode_general_k} with the values for $a=1/k$ and $b=1$ substituted. \vspace{-2pt} 

\begin{figure}[t]
\begin{center}
{\small
\fbox{\parbox[t]{6.3in}{\noindent{}\AlgL~for round $i$\vspace{-4pt}

\begin{itemize}\renewcommand{\labelitemii}{$\circ$}\setlength{\itemindent}{-7pt}

\item{\it Inform Phase -} In each of  $2^{(1+ \frac{1}{k})i}$ slots:\vspace{-4pt}
\begin{itemize}[leftmargin=5pt]

\item Alice sends $m$ with probability $\frac{2c\ln^k n}{2^{i}}$. \vspace{-2pt}

\item Each uninformed node listens with probability $\frac{2}{\epsilon'\,2^{i}}$.  
\end{itemize}\vspace{-3pt}

\item{\it Propagation Phase -} For step $h=1$ to $ k-1 $ execute:\vspace{-6pt}

\begin{itemize}[leftmargin=5pt]\renewcommand{\labelitemiii}{-}

\item In each of  $2^{(1+\frac{1}{k})i}$ slots:\vspace{-0pt}
\begin{itemize}[leftmargin=10pt]
\item Each informed node sends $m$ with probability $\frac{1}{n}$ and terminates at the end of the step.

\item Each uninformed node listens with probability $\frac{2ec}{\epsilon'2^{i}}$.
\end{itemize}\vspace{-7pt}

\end{itemize}

\item{\it Request Phase -} In each of  $2^{(1+\frac{1}{k})i}$ slots:\vspace{-2pt} 
\begin{itemize}[leftmargin=5pt]

\item Each uninformed node sends a \texttt{nack} message with probability $\frac{1}{n}$,  listens with probability $\frac{c+1}{(1-e^{-64\epsilon'})2^{i}}$, and terminates if at most $5c\ln n$ noisy slots are heard. 

\item Alice listens with probability $\frac{c\ln n}{(1-e^{-4\epsilon'})2^{(1+ 1/k)i} }$ and terminates if at most $5c\ln n$ \texttt{nack} messages or noisy slots are heard. \vspace{-5pt}
 
\end{itemize}
\end{itemize}
}
} 

} 
\end{center}
\vspace{-18pt}
\caption{Pseudocode of round $i$ for  general $k$.}\label{fig:pseudocode_general_k}
\vspace{-9pt}\end{figure}

\subsection{Analysis for {\large $k=3$}}

We prove the case for $k=3$ case which demonstrates the key features regarding how the proof must change. Notable changes are that Theorem~\ref{theorem:concentration} no longer suffices to prove a lower bound on the size of the sets $S_{i,h}$; although, we can still use it to obtain a useful (loose) upper bound. Our proof structure changes to handle dependencies between the variables discussed in Section~\ref{section:analysis}. Also, Alice must now send with probability $2c\ln^k n/2^i$ and we consider $13\lg\ln n\leq i\leq \lg n + O(1)$; in general, $i = \Omega(\lg\ln n)$ for constant $k$. These changes are artifacts of our proof technique.

In the following, we conceptually divide the $n$ nodes into $\frac{n\ln^2 n}{2^{2i/3}}$  groups each of size $2^{2i/3}/\ln^2n$ nodes. We stress that these groups provide a method of counting how many nodes become informed, but such groupings play no part in the protocol. Our goal is to show that at least one member of each group becomes informed by the end of the phase.  To do this, we first prove that, for every group, each slot in the inform phase has a sufficient probability of being listened to by a node in the group.\vspace{\myval} 

\begin{lemma}\label{lemma:someone_listens_general2}
Assume at least $\epsilon'n$ nodes are uninformed and active. Then, for any slot in the inform phase, the probability that at least one node in each group is listening in that slot is at least $\frac{1}{2^{i/3}\ln^2n}$.\vspace{\myval} 
\end{lemma}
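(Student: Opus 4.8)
The plan is to fix one arbitrary group $G$ and bound from below the probability that some member of $G$ listens in a given inform-phase slot, then observe that this bound is exactly the claimed $\frac{1}{2^{i/3}\ln^2 n}$. Since only uninformed active nodes listen, I would first use the $\epsilon' n$-uninformed hypothesis to regard each counted group as consisting of $g = 2^{2i/3}/\ln^2 n$ uninformed active nodes; a group that already contains an informed node trivially meets the downstream ``some member becomes informed'' goal, so the interesting groups are the uninformed ones.

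First I would recall from the general-$k$ pseudocode (Figure~\ref{fig:pseudocode_general_k}) that each uninformed node listens independently in the slot with probability $p = \frac{2}{\epsilon'\,2^i}$. By independence across the $g$ members of $G$, the probability that nobody in $G$ listens is $(1-p)^g$, so the event of interest has probability $1 - (1-p)^g$.

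Next I would convert this into the target bound by a standard two-step estimate. Using $1-p \le e^{-p}$ gives $(1-p)^g \le e^{-pg}$, hence $1 - (1-p)^g \ge 1 - e^{-pg}$, where
\[
pg \;=\; \frac{2}{\epsilon'\,2^i}\cdot\frac{2^{2i/3}}{\ln^2 n} \;=\; \frac{2}{\epsilon'\,2^{i/3}\ln^2 n}.
\]
Because $i \ge 13\lg\ln n$, the quantity $2^{i/3}\ln^2 n$ is at least $(\ln n)^{13/3+2}$ and grows with $n$, so $pg \le 1$ for $n$ large. On $[0,1]$ the elementary inequality $1 - e^{-x} \ge x/2$ holds, which yields $1 - (1-p)^g \ge pg/2 = \frac{1}{\epsilon'\,2^{i/3}\ln^2 n}$. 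Finally, since $\epsilon' \le 1$ we have $1/\epsilon' \ge 1$, giving $1 - (1-p)^g \ge \frac{1}{2^{i/3}\ln^2 n}$, as required.

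The main obstacle I anticipate is not the arithmetic but the bookkeeping around group membership: the per-slot listening probability degrades linearly in the number of \emph{uninformed active} members of $G$, so the proof genuinely needs each counted group to be populated (up to the stated size) by uninformed active nodes rather than already-informed ones. Making this precise --- and confirming that the $\epsilon' n$ hypothesis together with the group size $2^{2i/3}/\ln^2 n$ supplies enough such groups to later conclude $|S_{i,1}| = \Omega(n\ln^2 n / 2^{2i/3})$ --- is the step requiring care; the remaining ingredients ($1-p \le e^{-p}$, the small-$x$ bound on $1 - e^{-x}$, and $\epsilon' \le 1$) are routine.
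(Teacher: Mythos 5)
Your calculation is correct and is essentially the paper's: bound the probability that no group member listens by $(1-p)^{(\#\text{ uninformed listeners})}\le e^{-p\cdot(\cdot)}$ and then linearize $1-e^{-x}$ (the paper uses Fact~\ref{lemma:identity2}, i.e.\ $1-e^{-2y}\ge y$, where you use $1-e^{-x}\ge x/2$ on $[0,1]$; these are the same estimate). The one place you diverge is the bookkeeping you yourself flag as the delicate step, and the paper resolves it differently: rather than packing the $\ge\epsilon' n$ uninformed active nodes into \emph{fewer, fully uninformed} groups as you do, the paper keeps all $\frac{n\ln^2 n}{2^{2i/3}}$ groups of the partition and, since the grouping is an arbitrary proof device, assigns at least $\epsilon'\,2^{2i/3}/\ln^2 n$ uninformed active nodes to \emph{every} group. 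The exponent then becomes $p\cdot\epsilon' g = \frac{2}{2^{i/3}\ln^2 n}$, the $\epsilon'$ cancels, and the claimed bound $\frac{1}{2^{i/3}\ln^2 n}$ holds for each of the $\frac{n\ln^2 n}{2^{2i/3}}$ groups. Your convention instead yields a stronger per-group probability $\frac{1}{\epsilon' 2^{i/3}\ln^2 n}$ but only for roughly $\epsilon'\,\frac{n\ln^2 n}{2^{2i/3}}$ groups, so when this feeds into Lemma~\ref{lemma:inform_phase_general} you would conclude $|S_{i,1}|\ge \epsilon'\,\frac{n\ln^2 n}{2^{2i/3}}$ rather than $\frac{n\ln^2 n}{2^{2i/3}}$ --- asymptotically the same since $\epsilon'$ is a constant, but you would need to carry that extra $\epsilon'$ through Lemmas~\ref{lemma:propagation_phase1_general} and~\ref{lemma:inform_phase2_general}, whereas the paper's even-spreading move avoids it entirely. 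Everything else in your argument (the verification that $pg\le 1$ in the stated range of $i$, and $1/\epsilon'\ge 1$) is fine.
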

\begin{proof}
In round $i$, consider a group of $2^{2i/3}/\ln^2n$ nodes. Since at least $\epsilon'n$ nodes are uninformed and active, and the group membership is arbitrary, we can consider each such disjoint group to possess at least $\epsilon'\, 2^{2i/3}/\ln^2 n$ active nodes. Therefore, in a fixed slot, the probability that none of the nodes in a group are listening is $(1-\frac{2}{\epsilon'\, 2^{i}})^{\epsilon'\, 2^{2i/3}/\ln^2 n}\leq e^{-2/(2^{i/3}\ln^2 n)}$. Therefore, the probability that  at least one node in a group is listening is equal to or greater than $1-e^{-2/(2^{i/3}\ln^2 n)}\geq \frac{1}{2^{i/3}\ln^2n}$ by Fact~\ref{lemma:identity2}.\vspace{\myval} 
\end{proof}

\noindent{}Using our analysis via groups, the next lemma states proves the existence of a set $S_{i,1}$ of at least $\frac{ n\ln^2 n}{2^{2i/3}}$ nodes informed after the completion of a non-blocked inform phase.\vspace{-3pt} 

\begin{lemma}\label{lemma:inform_phase_general}
Assume that at least $\epsilon'n$ nodes are uninformed and active. Then, with high probability when $5\lg\ln n\leq i\leq \lg n + O(1)$, after an unblocked inform phase, there exist at least $\frac{n\ln^2n}{2^{2i/3}}$ newly informed nodes.\vspace{\myval} 
\end{lemma}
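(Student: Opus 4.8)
The plan is to abandon the direct martingale argument (Theorem~\ref{theorem:concentration}) used for $k=2$ in Lemmas~\ref{lemma:inform1}--\ref{lemma:inform2} and instead argue group-by-group. First I would \emph{choose} a partition of the uninformed active nodes purely for the analysis: since at least $\epsilon' n$ nodes are uninformed and active, split them into $\frac{n\ln^2 n}{2^{2i/3}}$ disjoint groups, each holding $\epsilon'\,2^{2i/3}/\ln^2 n$ such nodes (the lower bound $i \ge 5\lg\ln n$ guarantees each group is non-empty, indeed polylogarithmically large, so this partition exists; the upper bound $i \le \lg n + O(1)$ keeps the number of groups at most $n$). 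The goal reduces to showing that w.h.p.\ \emph{every} group acquires at least one newly informed member, since disjointness then yields $\frac{n\ln^2 n}{2^{2i/3}}$ distinct newly informed nodes.

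For a single fixed group and a single unjammed slot, the group becomes informed if Alice transmits $m$ and some group member listens. Alice sends with probability $2c\ln^3 n/2^i$, and by Lemma~\ref{lemma:someone_listens_general2} at least one group member listens with probability at least $1/(2^{i/3}\ln^2 n)$; because these are decisions of distinct parties they are independent, so the per-slot informing probability is at least
\[
\frac{2c\ln^3 n}{2^i}\cdot\frac{1}{2^{i/3}\ln^2 n}=\frac{2c\ln n}{2^{4i/3}}.
\]
Since the inform phase is unblocked it contains at least $s/2 = 2^{4i/3}/2$ unjammed slots, and the product of the per-slot survival factors telescopes to at most $\bigl(1-\tfrac{2c\ln n}{2^{4i/3}}\bigr)^{2^{4i/3}/2}\le e^{-c\ln n}=n^{-c}$, so a fixed group fails with probability at most $n^{-c}$.

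The delicate point---and the reason the group method is needed---is justifying that telescoping product in the presence of dependencies. I would condition on the event that a group survives slots $1,\dots,j-1$ (no member informed yet); conditioned on this, every member of the group is still uninformed and active, hence still listens with the full probability $2/(\epsilon'2^i)$, while the slot-$j$ coin flips of Alice and of the nodes are independent of everything that happened earlier. Thus the conditional probability of informing the group in slot $j$ remains at least $2c\ln n/2^{4i/3}$, and the chain $\Pr[\text{survive all}]=\prod_j\Pr[\text{survive }j\mid\text{survived before}]$ is bounded factor-by-factor. This conditioning is exactly what a bounded-difference inequality cannot capture here: near $i=\lg n$ the target set has size only $\Theta(n^{1/3}\ln^2 n)$, far below the $n$ node-indicators whose differences sum to $\sum_u c_u^2 = n$, so Theorem~\ref{theorem:concentration} gives a vacuous lower-tail bound---whence the switch to per-group events.

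Finally I would union-bound the failure events over all $\frac{n\ln^2 n}{2^{2i/3}}\le n$ groups, giving total failure probability at most $n^{1-c}$, which is polynomially small for the tunable constant $c$ chosen large enough; this establishes that w.h.p.\ at least $\frac{n\ln^2 n}{2^{2i/3}}$ nodes are newly informed. I expect the conditioning/telescoping justification of the previous paragraph to be the main obstacle, with the remaining per-slot probability estimates being routine applications of Fact~\ref{lemma:identity2} and the independence of distinct nodes' coin flips.
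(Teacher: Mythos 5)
Your proposal is correct and takes essentially the same route as the paper: partition the uninformed active nodes into $\frac{n\ln^2 n}{2^{2i/3}}$ groups, combine Alice's sending probability with Lemma~\ref{lemma:someone_listens_general2} to get a per-slot, per-group informing probability of $\frac{2c\ln n}{2^{4i/3}}$, use the unblocked assumption ($\sum_j q_j \geq s/2$) to bound each group's failure probability by $n^{-c}$, and union-bound over the groups. Your explicit conditioning argument for the telescoping product, and your explanation of why the bounded-difference inequality of Theorem~\ref{theorem:concentration} is vacuous here, are welcome elaborations of steps the paper leaves implicit.
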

\begin{proof}
The phase consists of $s=2^{(4/3)i}$ slots. Since at least $\epsilon'n$ nodes are uninformed and active, Lemma~\ref{lemma:someone_listens_general2} guarantees that the probability no nodes in a group of size $2^{2i/3}/\ln^2 n$ receive $m$ in a fixed slot is at most $1-(\frac{2c\ln^3 n}{2^{i}})(\frac{1}{2^{i/3}\ln^2 n})q_j$  where $q_j=0$ if Carol jams and $q_j=1$ if she does not. It follows that, over the phase, the probability of all active uninformed nodes in the group failing  to obtain $m$  is at most $\prod_{j=1}^{s}(1-(\frac{2c\ln^3n}{2^{i}})(\frac{1}{2^{i/3}\ln^2n})q_j )$ $ \leq e^{-\frac{2c\ln n}{2^{(4/3)i}}\cdot \sum_1^{s} q_j}     \leq n^{-c}$ since $\sum_1^{s} q_j \geq \frac{s}{2}$. Taking a union bound over all groups, we conclude that at least one node from each of the $\frac{n\ln^2 n}{2^{2i/3}}$ groups becomes informed; this yields the result.\vspace{\myval} 
\end{proof}

\noindent{}While not tight, it is sufficient to use essentially the same upper bound argument as in Lemma~\ref{lemma:inform2}:\vspace{\myval} 

\begin{lemma}\label{lemma:inform_lower_general}
Assume at least $\epsilon'\,n$ nodes are uninformed and active at the start of an unblocked inform phase and $5\lg\ln n \leq i \leq \lg n + O(1)$. Then, w.h.p., the number of nodes that become newly informed by the end of this inform phase is at most $\frac{(1+\lambda')4n\ln^2n}{\epsilon'2^{i/2}}$ for an arbitrarily small constant $\lambda'>0$ and for $n$ sufficiently large.\vspace{\myval} 
\end{lemma}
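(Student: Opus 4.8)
The plan is to follow the upper-bound argument of Lemma~\ref{lemma:inform2} essentially verbatim, changing only the per-slot parameters to those of the $k=3$ inform phase and then invoking the dependent-variable concentration inequality of Theorem~\ref{theorem:concentration}. First I would, for each active uninformed node $u$, introduce the indicator $X_u$ equal to $1$ exactly when $u$ receives $m$ during the inform phase, and set $f=\sum_u X_u$. As in Lemma~\ref{lemma:inform2}, I upper-bound the per-slot probability that $u$ acquires $m$ by the product of Alice's sending probability $2c\ln^3 n/2^i$ and $u$'s listening probability $2/(\epsilon'2^i)$; since jamming can only lower the chance of receipt, the $q_j$ factors may be dropped when seeking an upper bound. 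A union bound over the $s=2^{(4/3)i}$ slots (i.e. $1-\prod_j(1-p\,q_j)\le s\,p$) bounds $\Pr(X_u=1)$, and linearity of expectation over the at-most-$n$ active uninformed nodes bounds $E[f]$.

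The second step is concentration. Exactly as noted before the inform lemmas, the $X_u$ are dependent (Alice's transmissions are shared across all listeners), so a Chernoff bound for independent variables is unavailable and I would instead apply Theorem~\ref{theorem:concentration} exposing one node at a time. The same bounded-difference bookkeeping used in Lemma~\ref{lemma:inform2} gives $c_u=1$ and hence $\sum_u c_u^2\le n$, so with $\lambda$ the gap between the target threshold $(1+\lambda')4n\ln^2 n/(\epsilon'2^{i/2})$ and $E[f]$ the inequality yields a failure probability at most $e^{-\lambda^2/(2\sum_u c_u^2)}$, followed by a union bound over nodes.

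The crux — and the reason a deliberately \emph{loose} bound is stated rather than the tight $\Theta(n\ln^3 n/2^{2i/3})$ value produced in the first step — is forcing this exponent to be superpolynomial across the whole range $5\lg\ln n\le i\le\lg n+O(1)$, which is precisely the phenomenon flagged in the preceding remark that Theorem~\ref{theorem:concentration} suffices only for a loose upper bound. Because $\sum_u c_u^2$ may be as large as $n$ while the true mean at $i=\lg n$ is only $\Theta(n^{1/3}\,\mathrm{polylog})$, a target of the form $(1+\lambda')E[f]$ would produce a deviation far too small for $\lambda^2/(2n)=\omega(\ln n)$. Inflating the denominator from $2^{2i/3}$ to $2^{i/2}$ makes the threshold $\Omega(n^{1/2}\ln^2 n)$ at $i=\lg n$, so that $\lambda=\Omega(n^{1/2}\ln^2 n)$ and $\lambda^2/(2n)=\Omega(\ln^4 n)=\omega(\ln n)$, rendering the bad event superpolynomially unlikely. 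The main obstacle I expect is the bookkeeping at the \emph{small}-$i$ end of the range: there I must check both that the stated threshold still dominates $E[f]$ (so that $\lambda\ge 0$) and that the exponent remains $\omega(\ln n)$. Since the two expressions cross near $i=6\lg\ln n+O(1)$, this is exactly what dictates the precise constant in the lower limit on $i$, and it is where I would pin that constant down (possibly slightly enlarging it beyond $5$) if the calculation demands.
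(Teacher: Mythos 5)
Your proposal matches the paper's own argument essentially verbatim: bound $\Pr(X_u=1)$ by $s\cdot(2c\ln^3 n/2^i)\cdot(2/(\epsilon'2^i))=O(\ln^3 n/2^{2i/3})$, observe that Theorem~\ref{theorem:concentration} cannot concentrate around this (too small) mean, and therefore deliberately inflate the threshold to $\Theta(n\ln^2 n/2^{i/2})$ so that $\lambda^2/(2n)=\omega(\ln n)$. Your caveat about the lower end of the range of $i$ (the mean only drops below the stated threshold once $i\geq 6\lg\ln n+O(1)$) is a legitimate and correct refinement of a point the paper glosses over; the concluding ``union bound over nodes'' is unnecessary but harmless.
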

\comment{
\begin{proof}
Let $s=2^{(4/3)i}$ and $X_u=1$ if node $u$ receives $m$ in the inform phase; otherwise, $X_u=0$. Then $Pr(X_u=1) = 1 - Pr( u \mbox{~fails in all $s$ slots} ) \leq 1 - \prod_{j=1}^s (1 - \frac{2c\ln^3 n}{2^i}\cdot \frac{2}{\epsilon'\,2^i}\,q_j) \leq 1 - e^{-(8c\ln^3 n)/(\epsilon'2^{2i}) \cdot \sum_{j=1}^s q_j}$ $\leq (8c\ln^3 n)/(\epsilon'2^{2i}) \cdot \sum_{j=1}^s q_j \leq 4c\ln^3 n/(\epsilon'2^{(2/3)i})$ since $\sum_{j=1}^s q_j \geq 2^{(4/3)i}/2$. Therefore, letting $X=\sum X_u$, we have $E[X] = O(n\ln^3 n/2^{(2/3)i})$; applying Theorem~\ref{theorem:concentration} directly as before does not work as the denominator of this expectation is too large. However, a loose upper bound is obtained by simply noting that $E[X] = O(n/2^{i/2})$ which suffices for our purposes (see Lemma~\ref{lemma:propagation_phase1_general}).  
\end{proof}
}
\noindent The members of the set $S_{i,1}$ are now used in the propagation phase to prove the existence of a larger set $S_{i,2}$ of size $\Theta(\frac{n\ln n}{2^{i/3}})$ informed members. Conceptually, this is proved by showing that members in $S_{i,1}$ can create at least one informed member in each of $\frac{n\ln n}{2^{i/3}}$ disjoint groups consisting of $\frac{2^{i/3}}{\ln n}$ nodes each; call each such conceptual group a {\it 2-group}. Again, we need to lower bound the probability that a slot is covered by at least one node in such 2-group.\vspace{\myval}

\begin{lemma}\label{lemma:someone_listens_general}
Assume that at least $\epsilon'n$ nodes are  uninformed and active, and assume that the inform phase was not blocked. Then, for any slot in  step $h=1$ of the propagation phase, the probability that at least one node in a 2-group is listening in that slot is at least $\frac{ec}{2^{(2/3)i}\ln n}$. \vspace{-6pt}
\end{lemma}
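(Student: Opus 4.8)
The plan is to mirror the argument of Lemma~\ref{lemma:someone_listens_general2}, now applied to the $2$-groups of size $\frac{2^{i/3}}{\ln n}$ and to the listening probability $\frac{2ec}{\epsilon'2^i}$ that uninformed nodes use in step $h=1$ of the propagation phase (see Figure~\ref{fig:pseudocode_general_k}). The only structural change is that the group size and the listening probability differ from the inform-phase values, so I must track how the exponent works out.

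First I would fix the grouping. Since the $2$-groups are purely an analytical counting device and there are at least $\epsilon'n$ uninformed active nodes partitioned among the $\frac{n\ln n}{2^{i/3}}$ disjoint $2$-groups, I may take each $2$-group to contain at least $\epsilon'\cdot\frac{2^{i/3}}{\ln n}$ active uninformed nodes; this is consistent because the average number of active uninformed nodes per group is exactly $\frac{\epsilon'n}{n\ln n/2^{i/3}}=\epsilon'\cdot\frac{2^{i/3}}{\ln n}$. Next I would fix a slot in step $h=1$ and bound the probability that no node in a $2$-group listens. Each uninformed node listens independently with probability $\frac{2ec}{\epsilon'2^i}$, so this probability is at most
$$\left(1-\frac{2ec}{\epsilon'2^i}\right)^{\epsilon'2^{i/3}/\ln n}\leq e^{-\frac{2ec}{2^{2i/3}\ln n}},$$
where the key cancellation is that the group size $\epsilon'2^{i/3}/\ln n$ multiplied by the per-node probability $2ec/(\epsilon'2^i)$ leaves an exponent of order $2^{-2i/3}/\ln n$. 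Hence the probability that at least one node listens is at least $1-e^{-\frac{2ec}{2^{2i/3}\ln n}}$, and applying Fact~\ref{lemma:identity2} in the form $1-e^{-x}\geq x/2$ (valid since $x=\frac{2ec}{2^{2i/3}\ln n}\leq 1$ over the range $5\lg\ln n\leq i\leq \lg n+O(1)$ for $n$ large) yields the claimed bound $\frac{ec}{2^{2i/3}\ln n}$.

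I do not expect a serious obstacle, as the computation is essentially identical to Lemma~\ref{lemma:someone_listens_general2}; the one step requiring care is the first, namely justifying that each $2$-group may be taken to hold a $\geq\epsilon'$ fraction of active uninformed nodes. This is legitimate precisely because the grouping plays no role in the protocol and is chosen by us only for counting, and the total active-node budget $\epsilon'n$ matches an even $\epsilon'$-fraction allocation across all $\frac{n\ln n}{2^{i/3}}$ groups up to rounding, which is absorbed for $n$ large. The hypothesis that the inform phase was unblocked is not strictly needed for the listening bound itself, but it keeps us in the regime where the set $S_{i,1}$ exists, so that this listening guarantee can subsequently be combined with an exactly-one-sender bound to establish the creation of $S_{i,2}$.
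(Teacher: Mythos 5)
Your proposal is correct and follows essentially the same argument as the paper's proof: allocate at least $\epsilon'\,2^{i/3}/\ln n$ active uninformed nodes to each $2$-group, bound the no-listener probability by $e^{-2ec/(2^{(2/3)i}\ln n)}$, and apply Fact~\ref{lemma:identity2} to obtain the stated lower bound. Your side remark that the unblocked-inform-phase hypothesis is not needed for the listening bound itself is accurate but does not change anything.
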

\begin{proof}
Let $G$ be a 2-group consisting of $2^{i/3}/\ln n$ nodes. Since at most $(1-\epsilon') n$ nodes have terminated, we can consider each disjoint 2-group to possess at least $\epsilon'\, 2^{i/3}/\ln n$ active nodes. The probability that none of the nodes in $G$ are listening in a slot is $(1-\frac{2ec}{\epsilon' 2^{i}})^{\epsilon'\cdot 2^{(i/3)}/\ln n}\leq e^{-2ec/(2^{(2/3)i}\ln n)}$. Therefore, the probability that  at least one node in $G$ is listening to a particular slot is at least $1-e^{-2ec/(2^{(2/3)i}\ln n)} \geq ec/(2^{(2/3)i}\ln n)$ by Fact~\ref{lemma:identity2}.\vspace{\myval}
\end{proof}

\noindent{}Define a blocked step of the propagation phase to be one where more than half the slots in that step are jammed. We can now prove the existence of $S_{i,2}$:\vspace{\myval}

\begin{lemma}\label{lemma:propagation_phase1_general}
Assume that at least $\epsilon'n$ nodes are uninformed and active, and the inform phase was not blocked. Then, w.h.p., at least $\frac{n\ln n}{2^{i/3}}$ nodes are newly informed in an unblocked step $h=1$ of the propagation phase for $5\lg\ln n\leq i\leq \lg n + O(1)$.\vspace{\myval}
\end{lemma}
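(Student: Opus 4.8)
The plan is to mirror the structure of the proof of Lemma~\ref{lemma:inform_phase_general}, but with the role of Alice's single transmission now played by the event that \emph{exactly one} member of $S_{i,1}$ sends, and with the listening-coverage bound supplied by Lemma~\ref{lemma:someone_listens_general}. Let $s=2^{(4/3)i}$ be the number of slots in step $h=1$ and let $x=|S_{i,1}|$. First I would condition on the high-probability event, guaranteed by Lemmas~\ref{lemma:inform_phase_general} and~\ref{lemma:inform_lower_general}, that $\frac{n\ln^2 n}{2^{2i/3}}\le x\le \frac{4(1+\lambda')n\ln^2 n}{\epsilon'\,2^{i/2}}$; since we intersect only finitely many w.h.p.\ events, this conditioning costs nothing asymptotically.

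The crux of a successful propagation in a single slot is that some informed node delivers $m$ without a collision, i.e.\ that exactly one node of $S_{i,1}$ transmits. This probability is $x\cdot\frac{1}{n}\left(1-\frac{1}{n}\right)^{x-1}$. I would use the lower bound on $x$ for the leading factor and the upper bound on $x$ for the collision-avoidance factor: by Fact~\ref{lemma:identity2} and the range $i\ge 5\lg\ln n$ (which forces $\frac{\ln^2 n}{2^{i/2}}=o(1)$, hence $\frac{2x}{n}=o(1)$), we get $\left(1-\frac{1}{n}\right)^{x-1}=\Omega(1)$, so the singleton-transmission probability is $\Omega\!\left(\frac{\ln^2 n}{2^{2i/3}}\right)$. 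This is exactly where the loose upper bound of Lemma~\ref{lemma:inform_lower_general} earns its keep: it certifies that $S_{i,1}$ is not so large that collisions choke off delivery.

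Next, fix a 2-group $G$. The sending coins of $S_{i,1}$ (informed nodes) and the listening coins of the active uninformed nodes of $G$ act on disjoint sets of nodes, so they are independent, and both are independent of Carol's worst-case-fixed jamming indicator $q_j$. Hence, for a non-jammed slot, the probability that some active uninformed member of $G$ newly receives $m$ is at least the product of the singleton-transmission probability with the coverage bound $\frac{ec}{2^{(2/3)i}\ln n}$ of Lemma~\ref{lemma:someone_listens_general}, namely $\Omega\!\left(\frac{c\ln n}{2^{(4/3)i}}\right)$. As long as $G$ has produced no new informed node, all of its members remain active and uninformed, so this per-slot bound persists across the whole step and the failure probability factorizes. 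Using that the step is unblocked ($\sum_{j=1}^{s}q_j\ge s/2=2^{(4/3)i}/2$), the probability that $G$ informs no one is at most
\[
\prod_{j=1}^{s}\left(1-\Omega\!\left(\tfrac{c\ln n}{2^{(4/3)i}}\right)q_j\right)\le e^{-\Omega(c\ln n)}\le n^{-(c+1)},
\]
for $c$ a sufficiently large constant. Since the $\frac{n\ln n}{2^{i/3}}\le n$ 2-groups are disjoint, a union bound shows that w.h.p.\ every 2-group contains a newly informed member, and disjointness makes these members distinct; this yields at least $\frac{n\ln n}{2^{i/3}}$ newly informed nodes, as claimed.

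The main obstacle is conceptual rather than computational: the martingale concentration of Theorem~\ref{theorem:concentration} is too weak to lower-bound $|S_{i,2}|$ directly, since the per-node success probability is too small relative to the number of nodes~\textemdash~exactly the difficulty flagged after Lemma~\ref{lemma:inform_lower_general}. The grouping device sidesteps this by replacing a global concentration argument with a per-group failure bound that needs only \emph{within-group, across-slot} independence of the sending and listening coins, after which a plain union bound over the disjoint groups suffices. The delicate point to verify carefully is the simultaneous use of both the lower and upper bounds on $x$ in the singleton-transmission estimate, together with checking that the regime $i\ge 5\lg\ln n$ keeps the collision factor $\left(1-\frac1n\right)^{x-1}$ bounded below by a positive constant.
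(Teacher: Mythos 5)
Your proposal is correct and follows essentially the same route as the paper's own proof: lower-bound the singleton-transmission probability of $S_{i,1}$ using both the lower and upper bounds on $|S_{i,1}|$, combine it with the per-slot coverage bound of Lemma~\ref{lemma:someone_listens_general} for each 2-group, multiply over the unblocked step's slots, and union bound over the disjoint 2-groups. The only differences are cosmetic (e.g., the exact constant in the final failure exponent).
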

\begin{proof}
The phase consists of $s=2^{(4/3)i}$ slots. For a fixed slot, the probability that a single node from $S_{i,1}$ is sending is $p=|S_{i,1}|(\frac{1}{n})(1-\frac{1}{n})^{|S_{i,1}|-1}$. By Lemmas~\ref{lemma:inform_phase_general} and~\ref{lemma:inform_lower_general}, we know that $\frac{n\ln^2n}{2^{2i/3}} \leq |S_{i,1}| \leq \frac{(1+\lambda')4n\ln^2n}{\epsilon'2^{i/2}}$, we have $p\geq \frac{\ln^2n}{2^{2i/3}}\cdot (1-\frac{1}{n})^{\frac{(1+\lambda')4n\ln^2n}{\epsilon'2^{i/2}}-1} \geq \frac{\ln^2 n}{2^{2i/3}} e^{-2(1+\lambda')4\ln^2n/(\epsilon'2^{i/2})} \geq \frac{\ln^2n}{e\,2^{2i/3}}$ in the range of $i$. Since at least $\epsilon'n$ nodes are uninformed and active, by Lemma~\ref{lemma:someone_listens_general}, the probability that no nodes in a fixed 2-group receive $m$ in a single slot is at most $1-(\frac{\ln^2 n}{e\,2^{2i/3}})(\frac{ec}{2^{(2/3)i}\ln n})q_j$  where $q_j=0$ if Carol jams and $q_j=1$ if she does not. It follows that the probability of all active and uninformed nodes in the $2$-group failing  to obtain $m$  in this step is at most $\prod_{j=1}^{s}(1-(\frac{c\ln n}{2^{(4/3)i}})q_j)$ $ \leq e^{-\frac{c\ln n}{2^{(4/3)i}}\cdot \sum_1^{s} q_j}  \leq n^{-c/2}$ since $\sum_1^{s} q_j \geq \frac{s}{2}$. Taking a union bound over all $n\ln n/2^{i/3}$ groups yields the result.\vspace{\myval}
\end{proof}

\noindent We need the next upper bound to ensure that members of $S_{i,2}$ will successfully send with sufficiently high probability:\vspace{\myval}


\begin{lemma}\label{lemma:inform_lower_general1}
Assume at least  $\epsilon'n$ nodes are active and uninformed, and both the inform phase and step $h=1$ of the propagation phase were not blocked. Then, w.h.p. where $5\lg\ln n \leq i \leq \lg n + O(1)$, the number of nodes that become newly informed by the end of this propagation phase is at most $\frac{(1+\lambda'')8cn\ln^2n}{\epsilon'2^{i/6}}$ for an arbitrary small constant $\lambda''>0$ and for $n$ sufficiently large.\vspace{\myval}
\end{lemma}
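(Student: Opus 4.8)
The plan is to mirror the loose upper-bound argument used for $S_{i,1}$ in the proof of Lemma~\ref{lemma:inform_lower_general}, now applied to step $h=1$ of the propagation phase. Let $s = 2^{(4/3)i}$ be the number of slots in this step (recall $k=3$), and for each node $u$ that is uninformed and active at the start of the step let $X_u = 1$ if $u$ receives $m$ during the step and $X_u = 0$ otherwise; set $X = \sum_u X_u$, so that $X$ is exactly $|S_{i,2}|$, the quantity the lemma bounds from above. Because the $X_u$ are dependent, I would again control $X$ through Theorem~\ref{theorem:concentration}, using the martingale differences $c_u = 1$ exactly as in Lemmas~\ref{lemma:inform1}--\ref{lemma:inform2}.

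The key estimate is an upper bound on the per-slot, per-node success probability. A node $u$ can become informed in slot $j$ only if it listens (probability $\frac{2ec}{\epsilon' 2^i}$) and exactly one member of $S_{i,1}$ transmits; by a union bound over the $|S_{i,1}|$ potential senders, each sending with probability $1/n$, the second event has probability at most $|S_{i,1}|/n$. Feeding in the loose upper bound $|S_{i,1}| \le \frac{(1+\lambda')4n\ln^2 n}{\epsilon' 2^{i/2}}$ from Lemma~\ref{lemma:inform_lower_general} gives a per-slot success probability of at most $\frac{2ec}{\epsilon' 2^i}\cdot\frac{(1+\lambda')4\ln^2 n}{\epsilon' 2^{i/2}} = O\!\left(\frac{\ln^2 n}{2^{3i/2}}\right)$. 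Applying $1 - \prod_{j}(1 - p_j) \le \sum_j p_j$ over the $s = 2^{(4/3)i}$ slots (jamming only lowers the success probability, so it is safely dropped for an upper bound) and then summing over the at most $n$ uninformed nodes, the exponent collapses as $\tfrac{4}{3}i - i - \tfrac{1}{2}i = -\tfrac{1}{6}i$, yielding $E[X] = O\!\left(\frac{n\ln^2 n}{2^{i/6}}\right)$; the constant in the lemma is chosen loosely enough to subsume the stray factors of $e$ and $1/\epsilon'$, so that $E[X] \le B_0 := \frac{8cn\ln^2 n}{\epsilon' 2^{i/6}}$ once $c$ is large enough.

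Finally I would invoke Theorem~\ref{theorem:concentration} with $\sum_u c_u^2 = n$: the target $(1+\lambda'')B_0$ leaves slack $\lambda = \lambda'' B_0 = \Omega\!\left(n\ln^2 n/2^{i/6}\right)$, and since $2^{i/6}\le n^{1/6}$ throughout $5\lg\ln n \le i \le \lg n + O(1)$, this slack is $\Omega(n^{5/6}\ln^2 n)$, comfortably above the floor $\sqrt{n\ln n}$ that Theorem~\ref{theorem:concentration} needs, so the tail $e^{-\lambda^2/(2n)}$ is superpolynomially small and $X \le (1+\lambda'')B_0$ w.h.p. The main obstacle --- and the reason the decay is only $2^{i/6}$ rather than the $2^{i/3}$ a tight analysis would give --- is that no tight upper bound on $|S_{i,1}|$ is available: as flagged in the proof of Lemma~\ref{lemma:inform_lower_general}, a direct martingale argument cannot pin $|S_{i,1}|$ from above at the top of the $i$-range, forcing us to propagate its loose $O(n\ln^2 n/2^{i/2})$ bound through step $h=1$. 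This loose estimate nonetheless suffices downstream, since together with the trivial bound $|S_{i,2}| \le n$ it guarantees $|S_{i,2}| = O(n)$, which is all that the subsequent ``exactly one sender'' argument for step $h=2$ requires.
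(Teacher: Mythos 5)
Your proof follows essentially the same route as the paper's: upper-bound the per-slot probability that some member of $S_{i,1}$ transmits via the loose bound from Lemma~\ref{lemma:inform_lower_general}, multiply by the listening probability and the $2^{(4/3)i}$ slots to get $E[|S_{i,2}|] = O(n\ln^2 n/2^{i/6})$, and conclude with Theorem~\ref{theorem:concentration} using unit Lipschitz constants. One small caveat: your stray factors of $e$ and $1/\epsilon'$ cannot be absorbed by ``taking $c$ large enough,'' since the listening probability and hence $E[X]$ scale linearly in $c$ and the factor cancels; but this constant-factor slack is immaterial, as the only downstream use (Lemma~\ref{lemma:inform_phase2_general}) needs just $|S_{i,2}| = o(n)$, which your bound delivers for $i \geq 13\lg\ln n$.
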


\comment{
\begin{proof}
Let $s=2^{(4/3)i}$ and $X_u=1$ if node $u$ receives $m$ in the inform phase; otherwise, $X_u=0$. The probability that a single node from $S_{i,1}$ is sending is $p=|S_{i,1}|(\frac{1}{n})(1-\frac{1}{n})^{|S_{i,1}|-1}$. By Lemmas~\ref{lemma:inform_phase_general} and~\ref{lemma:inform_lower_general}, we know that $\frac{n\ln^2n}{2^{2i/3}} \leq |S_{i,1}| \leq \frac{(1+\lambda')4n\ln^2n}{2^{i/2}}$, we have $p\leq   \frac{4(1+\lambda')\ln^2n}{2^{i/2}}\,(1 - \frac{1}{n})^{\frac{n\ln^2n}{2^{2i/3}}} \leq \frac{4(1+\lambda')\ln^2n}{e\,2^{i/2}}$ in the range of $i$ for $n$ sufficiently large. 

Then $Pr(X_u=1) = 1 - Pr( u \mbox{~fails in all $s$ slots} ) \leq 1 - \prod_{j=1}^s (1 - \frac{4(1+\lambda')\ln^2n}{e\,2^{i/2}}\cdot \frac{2}{\epsilon'\,2^i}\,q_j) \leq 1 - e^{-(16c\ln^2 n)/(\epsilon'2^{(3/2)i}) \cdot \sum_{j=1}^s q_j}$ $\leq 16c\ln^2 n/(e\epsilon'2^{(3/2)i}) \cdot \sum_{j=1}^s q_j \leq 8c\ln^2 n/(\epsilon'2^{(1/6)i})$. Therefore, letting $X=\sum X_u$, we have $E[X]  \leq 8c n\ln^2 n/(\epsilon'2^{(1/6)i})$. Applying Theorem~\ref{theorem:concentration}, the probability that we have more than $(1+\lambda'') \cdot \frac{8c n\ln^2 n}{2^{(1/6)i}}$ informed nodes is less than $e^{-\Theta\left(\frac{n^2\ln^4 n}{2^{(1/3)i}\,n}\right)}$ for a constant $\lambda''>0$ depending only on $\epsilon'$; this gives the result. 
\end{proof}
}

\noindent Finally, we can show that all remaining nodes receive $m$ if step $h=2$ of the propagation phase is not blocked:\vspace{\myval} 

\begin{lemma}\label{lemma:inform_phase2_general}
Let $7\lg\ln n \leq i \leq \lg n + O(1)$. Assume that in round $i$ both the inform phase and step $h=1$ of the propagation phase in round were not blocked. Then, if step $h=2$ of the propagation phase in round $i$ is unblocked, w.h.p. all nodes are informed by the end of the propagation phase. \vspace{0pt} 
\end{lemma}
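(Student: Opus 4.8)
The plan is to mirror the argument of Lemma~\ref{lemma:inform_phase2} from the $k=2$ case, now with $S_{i,2}$ playing the role that $S_i$ played there, since $|S_{i,2}| = \Theta(n\ln n/2^{i/3})$ is the analogue of the $\Theta(n\ln n/2^{i/2})$ set used for $k=2$. First I would set $s = 2^{(4/3)i}$ and invoke Lemmas~\ref{lemma:propagation_phase1_general} and~\ref{lemma:inform_lower_general1} to pin down the size of $S_{i,2}$: w.h.p. $\frac{n\ln n}{2^{i/3}} \leq |S_{i,2}| \leq \frac{(1+\lambda'')8cn\ln^2n}{\epsilon'2^{i/6}}$. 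The key quantity is the probability that exactly one node of $S_{i,2}$ transmits in a fixed slot, namely $p = |S_{i,2}|\frac{1}{n}(1-\frac{1}{n})^{|S_{i,2}|-1}$. I would lower-bound the linear factor using the lower bound on $|S_{i,2}|$ and control the collision factor $(1-\frac{1}{n})^{|S_{i,2}|-1}$ using the upper bound on $|S_{i,2}|$ together with Fact~\ref{lemma:identity2}; in the stated range this gives $p = \Omega(\ln n / 2^{i/3})$, i.e. $p \geq \frac{\ln n}{e\,2^{i/3}}$ up to constants.

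Given this, the per-slot success probability for a fixed active uninformed node is at least $p\cdot\frac{2ec}{\epsilon'2^i}\cdot q_j$, where $q_j=1$ when Carol does not jam slot $j$ and $q_j=0$ otherwise (recall each uninformed node listens with probability $\frac{2ec}{\epsilon'2^i}$ in this step). The probability that such a node fails in every slot of the step is then at most $\prod_{j=1}^{s}\bigl(1 - p\cdot\frac{2ec}{\epsilon'2^i}\,q_j\bigr) \leq \exp\bigl(-p\cdot\frac{2ec}{\epsilon'2^i}\sum_{j=1}^{s}q_j\bigr)$. Because step $h=2$ is unblocked we have $\sum_{j=1}^{s}q_j \geq s/2 = 2^{(4/3)i}/2$, and substituting the bound on $p$ collapses the exponent to $\Theta(c\ln n)$, so the per-node failure probability is $n^{-\Theta(c)}$. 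A union bound over the at most $n$ remaining uninformed nodes then shows that w.h.p. every node is informed by the end of the propagation phase, for $c$ chosen sufficiently large.

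The main obstacle is the collision factor $(1-\frac{1}{n})^{|S_{i,2}|-1}$, which must stay bounded below by a positive constant; otherwise $p$ becomes too small and the final union bound fails. This is precisely why the (deliberately loose) upper bound of Lemma~\ref{lemma:inform_lower_general1} is needed: it certifies that $|S_{i,2}|$ is a small enough sublinear fraction of $n$ so that, via Fact~\ref{lemma:identity2}, $(1-\frac{1}{n})^{|S_{i,2}|-1} = \Omega(1)$. Controlling this term is exactly what forces the lower end of the admissible range of $i$ up (to $7\lg\ln n$ in the statement, compared with $5\lg\ln n$ in the preceding lemmas), since the upper bound on $|S_{i,2}|$ only becomes usefully sublinear once $i$ is large enough to tame the $\ln^2 n/2^{i/6}$ factor in the exponent. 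Apart from tracking these constants, the remaining manipulation is the same routine product-and-union-bound calculation already carried out in Lemmas~\ref{lemma:inform_phase2} and~\ref{lemma:propagation_phase1_general}.
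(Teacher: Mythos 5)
Your proposal follows essentially the same route as the paper's proof: sandwich $|S_{i,2}|$ between the bounds of Lemmas~\ref{lemma:propagation_phase1_general} and~\ref{lemma:inform_lower_general1}, lower-bound the single-sender probability by $\frac{\ln n}{e\,2^{i/3}}$ using the collision factor, multiply by the listening probability and the jamming indicators $q_j$, exponentiate the product over the $2^{(4/3)i}$ slots using $\sum_j q_j \geq s/2$, and union bound over all nodes. One small caveat: taming the collision factor $(1-\frac{1}{n})^{|S_{i,2}|-1}$ actually needs $2^{i/6} \gtrsim \ln^2 n$, i.e.\ roughly $i \geq 12\lg\ln n$ (the paper's own proof invokes $i \geq 13\lg\ln n$ despite the lemma stating $7\lg\ln n$), so your assertion that $7\lg\ln n$ suffices inherits an inconsistency already present in the paper rather than introducing a new one.
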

\begin{proof}
Let $s=2^{(4/3)i}$ be the number of slots. Since the inform phase and step $h=1$ of the propagation phase were not blocked,  Lemmas~\ref{lemma:propagation_phase1_general} and~\ref{lemma:inform_lower_general1} guarantee w.h.p. that $\frac{n\ln n}{2^{i/3}} \leq |S_{i,2}| \leq \frac{(1+\lambda'')8cn\ln^2n}{\epsilon'2^{i/6}}$ for some arbitrarily small constant $\lambda>0$ . In a single slot, the probability that exactly one informed node in $S_{i,2}$ is sending is at least $|S_{i,2}|(\frac{1}{n})(1-\frac{1}{n})^{|S_{i,2}|-1}$ $\geq \frac{n\ln n}{2^{i/3}\,n}(1-\frac{1}{n})^{(1+\lambda'')8cn\ln^2 n/(\epsilon'2^{i/6})}$ $\geq \frac{\ln n}{e\,2^{i/3}}$ for $i\geq 13\lg\ln n $ and $n$ sufficiently large. Therefore, the probability a particular uninformed node  does not receive $m$ in a single slot is at most $1- \frac{\ln n}{e\,2^{i/3}} \frac{2ec}{2^{i}}\,q_j$  where $q_j=0$ if Carol jams and $q_j=0$ if she does not. The probability of a specific active and uninformed node failing to obtain $m$  in this phase is at most $\prod_{j=1}^{s}(1-\frac{2c\ln n}{2^{(4/3)i}}\,q_j )$ $ \leq e^{-\frac{2c\ln n}{2^{(4/3)i}}\cdot \sum_1^{s} q_j} $ which gives us the high probability guarantee since $\sum_1^{s} q_j \geq \frac{s}{2}$. Taking a union bound over all nodes yields the result.\vspace{\myval}
\end{proof}

\noindent{\bf Discussion:} Therefore,  all nodes will receive $m$ so long as neither the inform phase nor any  steps in the propagation phase are blocked. Note that the arguments in Lemma~\ref{lemma:cost_ratios} and~\ref{lemma:competitive_costs} do not change so long as $k$ is a constant (see below). Analogous to our argument when $k=2$, when round $i = \lg n + \frac{k}{k+1}\lg(C/\beta)$ is reached, Carol and the Byzantine nodes do not have sufficient energy to block a phase (or a step of a phase) in which case, the termination conditions for Alice and the correct nodes are met. By chaining together more proofs showing the existence of $S_{i,h}$, this proof structure can be extended for any constant $k$. \vspace{\myval} 

\subsection{Limits To This Approach}\label{subsection:limits}

By increasing $k$, the protocol is more resource competitive; however, there is a limit. Note that, due to the steps of the propagation phase, the latency and the overall costs increases by a factor of $\Theta(k)$. Now consider if $k\geq \omega(1)$. Then, Alice and her nodes each require $\omega(n^{1/k})$ to execute the $O(k)$  propagation phase steps and this exceeds their budget.

This cannot be remedied through any $\omega(1)$-factor increase in the budget of each node. To see why, let $k=2$ and assume that each node now has a budget of $C\,n^{1/2}\ln n$. Note that Carol may now block phases of length $C\,n^{3/2}\ln n$ which occurs for round $i=\lg n + (2/3)\lg\ln n + (2/3)\lg C$. However, in this round, each correct node must spend $2^{i/2}\ln n = \Omega(n^{1/2}\ln^{4/3}n)$; this exceeds its budget. This problem manifests for any $k=\omega(1)$.\vspace{\myval}  

\section{Extensions to the Protocol} 

In this section, we sketch how \AlgL~can be modified to tolerate a reactive adversary when $f <1/24$. We conclude by discussing how exact knowledge of $\ln n$ and $n$ is not required to successfully execute \AlgL. \vspace{0pt}

\subsection{Reactive Jamming: Make Your Own Noise}\label{section:noise}\vspace{0pt}

Within the current time slot, a reactive adversary can detect channel activity and decide whether to jam. The ability to perform CCA makes it possible for Carol to detect such activity based on the received signal strength indicator (RSSI) which incurs negligible cost. During either the inform or propagation phases, Carol is guaranteed to interfere with the transmission of $m$ if she jams. Such targeted jamming invalidates our analysis in Lemmas~\ref{lemma:inform1},~\ref{lemma:inform2}, and~\ref{lemma:inform_phase2}. However, while RSSI enables Carol to detect channel activity, it  provides no information about the transmitted content. Therefore, reactive jamming is only effective if the bulk of the channel activity involves the transmission of $m$. For example, if half of the slots contain non-critical traffic and the other half contain $m$, then jamming based simply on RSSI is no better than randomly jamming. While Carol might activate her transceiver in order to hear part of the transmission before deciding to jam, this is expensive. 


As in~\cite{king:conflict}, if there is sufficient background network traffic such that a constant fraction of the slots in each round are in use, then  a reactive adversary can be tolerated. But what if such traffic is absent? Another approach is to have the correct nodes generate their own traffic. Under this strategy, we show that, for $f < 1/24$, a reactive Carol is unable to prevent communication indefinitely and our algorithm is still resource competitive. Although $f< 1/24$ implies that the aggregate energy possessed by Alice and the correct nodes exceeds that of Carol and her Byzantine nodes, we emphasize that this problem is still non-trivial. For example, it is not possible to have have Alice outspend Carol since Alice can only send the message $O(n^{1/2})$ times while Carol can jam for $\Omega(n^{3/2})$ slots. As described above, we need to have the uninformed nodes generate additional traffic in order to overcome a reactive jammer.

To do this, for the inform and propagation phases, the modified protocol specifies that each node sends a decoy message  with probability $\frac{3}{4\epsilon'n}$ per slot and we assume that each correct node listens with a constant factor increase in probability (see $p_u$ in the proof of Lemma~\ref{lemma:inform1}). We now re-prove Lemma~\ref{lemma:inform1}.\vspace{5pt}

\noindent{\bf Lemma~\ref{lemma:inform1}.}{\it ~Assume at least $\epsilon'\,n$  nodes are uninformed and active at the start of an unblocked inform phase and $3\lg\ln n \leq i \leq \lg n + O(1)$. Then, w.h.p., the number of correct nodes that become newly informed by the end of this inform phase is at least $\frac{(1-\lambda)n\ln n}{2^{i/2}}$ for an arbitrarily small constant $\lambda>0$ and for $n$ sufficiently large.}\vspace{5pt}
\begin{proof}
Let $s=2^{(3/2)i}$. Let the random variable $Z_j=1$ if a slot is occupied by one or more decoy messages; otherwise, let $Z_j=0$. Since all correct nodes send a decoy message independently with uniform probability $\frac{3}{4\epsilon'n}$, $Pr(Z_j=1) = 1 - (1-\frac{3}{4\epsilon'n})^{\epsilon' n} \geq 1-e^{-3/4} \geq 1/2$. Letting $Z=\sum_j^s Z_j$ it follows that $E[Z]\geq (1/2)s$. Conversely, using Fact~\ref{lemma:identity2}, $Pr(Z_j=1) \leq 1 - (1-\frac{3}{4\epsilon' n})^n \leq 1 - e^{-3/(2\epsilon')}$; therefore, $E[Z] \leq (1 - e^{-3/(2\epsilon')})\,s$. By standard Chernoff bounds, for $i\geq 3\lg\ln n$, the number of slots containing one or more decoy messages, denoted by $s_N$, is $(1-\delta)(1/2)s \leq s_{N} \leq(1+\delta)(1 - e^{-3/(2\epsilon')})s$ w.h.p. for $\delta>0$ arbitrarily small depending only on sufficiently large $n$. By a similar argument, given $i\geq 3\lg\ln n$, the number of slots in which Alice sends $m$ is $s_{A} = (1\pm\delta')2^{i/2}\ln n$ w.h.p. for $\delta'>0$ arbitrarily small depending only on sufficiently large $n$.  

Redefine an inform phase as blocked if Carol jams more than $s/4$   slots {\it containing $m$ or at least one decoy message}. Carol's choice to jam such a slot (or listen to it) is made without knowing whether the slot contains $m$ or a decoy message. Therefore, for a fixed slot containing $m$ sent by Alice, the probability that Carol fails to listen to or block this slot in a non-blocked phase is at least $1 - \frac{s/4}{s_N}$. The probability that this same slot is not used by a correct node for sending a decoy message is at least $e^{-3/(2\epsilon')}$ as determined above. Let $p_u$ denote the probability that a node $u$ listens to a particular slot. Assuming Alice sends $m$, the probability that $u$ receives $m$ in a fixed slot is at least $(1 - \frac{s/4}{s_N})(e^{-3/(2\epsilon')})p_u \geq (e^{-3/(2\epsilon')} - \frac{2^{(3/2)i}}{4\cdot e^{3/(2\epsilon')}\cdot (1-\delta)(1/2)2^{(3/2)i}})p_u$  $= (\frac{1}{e^{3/(2\epsilon')}} - \frac{(1+\delta'')}{2e^{3/(2\epsilon')}})p_u$  for small enough $\delta$ given sufficiently large $n$. It follows that, for sufficiently small $\delta''$(say $\delta'' \leq 1/2$), the probability that $u$ receives $m$ in a slot is at least $(\frac{1}{e^{3/(2\epsilon')}} - \frac{(1+\delta'')}{2e^{3/(2\epsilon')}})p_u = \frac{p_u}{4\,e^{3/(2\epsilon')}}$.\vspace{3pt} 

As in our original proof, let $X_u=1$ if $u$ obtains $m$ in the inform phase; otherwise, let $X_u=0$. Then $Pr(X_u=1) \geq 1 - (1 - \frac{p_u}{4\,e^{3/(2\epsilon')}})^{s_A} - O(1/n^{c'})$  where the last term is the probability that $s_N$ or $s_A$ deviate by more than $\delta$ from their respective expected values and $c'>0$ is some constant.  Redefine $p_u = \frac{16\,e^{3/(2\epsilon')}}{\epsilon'(1-\delta')2^i}$; this is a constant factor increase, so the cost to each node is asymptotically equal. Then, $Pr(X_u=1)
\geq 1-(1-\frac{4}{\epsilon'\,(1-\delta')\,2^i})^{(1-\delta')\,2^{i/2}\ln n} - O(1/n^{c'})$ $\geq 1 - e^{-4\ln n/(\epsilon'\,2^{i/2})} - O(1/n^{c''}) \geq \frac{2\ln n}{\epsilon' 2^{i/2}} - O(1/n^{c''}) \geq \frac{\ln n}{\epsilon' 2^{i/2}}$ for sufficiently large $n$. We can then apply Theorem~\ref{theorem:concentration} as in the original proof and obtain the desired result.\vspace{\myval} 
\end{proof}

\noindent The proofs for Lemmas~\ref{lemma:inform2} and~\ref{lemma:inform_phase2} can be redone in a similar fashion. Now we show that communication occurs in the final round. The modifications to the sending and listening probabilities, and sending of the decoy messages, increases costs by a constant factor. \vspace{\myval}

\begin{lemma}\label{lemma:reactive_success}
With high probability, the modified version of \AlgL~guarantees that at least $(1-\epsilon')n$   nodes become informed when $f<1/24$ and Carol is reactive.\vspace{\myval} 
\end{lemma}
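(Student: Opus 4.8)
The plan is to reproduce the budget-exhaustion argument of Lemma~\ref{lemma:competitive_costs}, but replacing the original blocking notion by its reactive analog and charging each correct node for the decoy traffic. The reactive re-proof of Lemma~\ref{lemma:inform1}, together with the corresponding reactive versions of Lemmas~\ref{lemma:inform2} and~\ref{lemma:inform_phase2} (redone the same way), already show that whenever an inform phase and its propagation phase are \emph{not} blocked---meaning Carol jams at most $s/4$ of the $\Theta(s)$ noisy slots, $s=2^{(3/2)i}$---a set $S_i$ of the right size is created and every remaining active node is informed. Thus, exactly as in Lemma~\ref{lemma:no_blocking_cost}, it suffices to show that the modified protocol reaches a round in which Carol can block no phase while at least $\epsilon'n$ uninformed nodes are still active; the $n$-uniform accounting then leaves at most an $\epsilon'$-fraction uninformed, which renormalizes to $\epsilon$.

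First I would pin down the price of blocking. By the decoy calculation in the reactive Lemma~\ref{lemma:inform1}, w.h.p.\ at least $s_N \geq (1-\delta)s/2$ slots carry a decoy (or $m$), and Carol---who cannot distinguish $m$ from a decoy---must jam more than $s/4$ of them to block, so blocking a round-$i$ phase costs her $\Omega(2^{(3/2)i})$. Summing geometrically, forcing the protocol past round $r$ costs Carol and her Byzantine nodes $\Omega(2^{(3/2)r})$ in total, while their aggregate budget is at most $Cfn^{3/2} + Cn^{1/2}\ln n$. Hence there is a round $r = \lg n + \Theta(1)$ past which no phase (nor any propagation step, in the general-$k$ version) can be blocked, precisely as in Lemma~\ref{lemma:competitive_costs}.

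The main obstacle is the budget balance that produces the threshold $f < 1/24$. The decoy defense is \emph{lossy}: only a constant fraction of decoy slots are noisy, Carol jams only noisy slots, and the masking argument forces the jam-threshold ($s/4$) to sit strictly below the noise level ($\geq s/2$). Tracking these constants, the correct nodes must collectively spend a constant factor (about $24$) more than the energy they force Carol to waste, so comparing their aggregate budget $\approx Cn^{3/2}$ against Carol's $\approx Cfn^{3/2}$ requires $f < 1/24$ for the correct side to outlast her. I would then verify the matching individual statement: at round $r$ each node's cost is its (constant-factor-increased) listening cost $\Theta(2^{r/2})$ plus its decoy cost $\Theta(2^{(3/2)r}/n)$, and checking that both stay within the $C n^{1/2}$ budget under $f < 1/24$---absorbing the $\epsilon'$- and $c$-dependent constants into $C$ exactly as in Lemma~\ref{lemma:competitive_costs}---is the delicate bookkeeping step. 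I would also note that the request phase is unaffected, since \texttt{nack} traffic is itself noise, so a reactive Carol jamming it still pays one unit per slot and Lemmas~\ref{lemma:Alice_request1}--\ref{lemma:node_request2} carry over verbatim.

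Combining the two bounds at round $r$ yields a round with no blocked phase at which at least $\epsilon'n$ nodes remain active; the reactive analog of Lemma~\ref{lemma:no_blocking_cost} then informs all but an $\epsilon'$-fraction of the correct nodes w.h.p., which is the claim.
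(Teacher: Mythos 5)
Your proposal follows essentially the same route as the paper's proof: invoke the reactive re-proofs of Lemmas~\ref{lemma:inform1}--\ref{lemma:inform_phase2}, argue that blocking a round-$i$ phase costs Carol $\Omega(2^{(3/2)i})$ because w.h.p.\ a constant fraction of slots are noisy and she cannot distinguish $m$ from decoys, locate the round $i=\lg n+\Theta(1)$ she cannot block, and then check that each correct node's decoy cost $\Theta(2^{(3/2)i}/n)=\Theta(Cf\,n^{1/2})$ in that round (doubled for the geometric sum) fits inside its $C\,n^{1/2}$ budget, which is exactly where the paper's $1-24f>0$ condition arises. The argument is correct and matches the paper's proof in structure and in the source of the $f<1/24$ threshold.
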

\begin{proof}
Again, call a slot {\it active} if it contains either $m$ or noise. Using the new definition of a blocked phase defined in the proof of Lemma~\ref{lemma:inform1} above, note that Carol and her Byzantine nodes cannot block a phase containing at least $4Cfn^{3/2}$ active slots. We can make the same argument as in Lemma~\ref{lemma:competitive_costs} by using $\beta$ rather than $1/4$ as the fraction of jammed slots that constitute a blocked phase; however, for simplicity we stick with $1/4$ noting that this does not affect correctness. From the proof above, w.h.p., at least $(1-\delta)(1/2)2^{(3/2)i}$ slots in a phase are active. For concreteness, set $\delta=1/2$ which implies that, w.h.p, at least $(1/4)\,2^{(3/2)i}$ slots are active. Then, solving for $i$ in $2^{(3/2)i}/4 = 4C\,f\,n^{3/2}$ tells us that, w.h.p, Carol and her Byzantine nodes cannot block round $i = \frac{2}{3}\lg(16\,C\,f) + \lg n$. By our new Lemma~\ref{lemma:inform1}, and by modifying Lemmas~\ref{lemma:inform2} and~\ref{lemma:inform_phase2}, at least $(1-\epsilon')n$ correct nodes will become informed in this round. 

As shown in Lemma~\ref{lemma:competitive_costs}, we must ensure that Alice and the correct nodes do not exceed their respective budgets.  When executing \AlgL, there exists some constant $d'>0$ such that the  cost to each node in round $i$ is at most $d'\,2^{i/2}$. The cost to a correct node $u$ is at most $d'2^{i/2} + ( \frac{3}{4n} + p_u)\,2^{(3/2)i}$ where $p_u = \frac{16\,e^{3/(2\epsilon')}}{\epsilon'(1-\delta')2^i}$ and we can set $\delta'=1/2$. Substituting for $i$, the cost to $u$ for this round is at most $\mathcal{C}_u = (d'(16Cf)^{1/3} + 12\,C\,f + d''\,(C\,f)^{1/3})n^{1/2}$ where $d''>0$ is some constant depending only on $\epsilon$. Because rounds double in size, the total cost to $u$ up to and including this round is at most $2\,\mathcal{C}_u$. Solving for $C$ in $ C\,n^{1/2} \geq 2\,\mathcal{C}_u$ yields $C \geq (\frac{2d'(16f)^{1/3} + 2d''f^{1/3}}{1-24\,f})^{3/2}$. Therefore, for $f<1/24$ and sufficiently large $C$, the correct nodes do not exceed their budget.\vspace{\myval} 
\end{proof}

We note that $f<1/24$ is an artifact of the constants used to define a blocked phase and to provide the w.h.p. guarantees; it seems likely that this can be improved. However, the crucial point is that our modified \AlgL~is resource competitive against a reactive Carol who controls $\Theta(n)$ Byzantine nodes, and correct nodes can bear the costs for tolerating such a reactive adversary; we do not rely on an external and free source of noise.
\vspace{0pt}


\subsection{System-Size Parameters}\label{section:size}\vspace{0pt}

As stated, the sending and listening probabilities in our protocol require knowledge of $\ln n$ and $1/n$. However, the guarantees provided by \AlgL~still hold if each node has a constant-factor approximation to these values. Such approximations can be used instead of the true values while incurring only a constant-factor increase in cost. There are well-known ``folklore'' algorithms  for efficiently obtaining such approximations in a distributed setting and we may hope that these are executed prior to a jamming attack. 

If such approximations are not possible, our protocol still functions if all nodes share the same polynomial overestimate of $n$; that is, $\nu_u = n^{c'}$ for any constant $c'\geq 1$.  Each node obtains the constant-factor approximation  $\ell_u = \lceil c\ln n \rceil$ to use in our protocol.  In the propagation phase of a fixed round $i$, each step (see Figure~\ref{fig:pseudocode_general_k}) is executed $g$ times  with informed nodes sending with probability $\frac{1}{2^i\,2^g}$ where $g=1, ..., \ell_u$. At some point, $g= \ln n $ and, therefore, each informed node will complete that step of the phase with the correct sending probability to within a factor of $2$. The same technique can be used in the request phase. In this case, the cost of executing~\AlgL~increases by a logarithmic factor and, consequently, the guarantees hold so long as there exists a large, but sublinear, $O(\frac{n}{\ln n})$ number of Byzantine nodes.\vspace{-3pt}

\section{Conclusion and Future Work}

As the size of WSN devices decrease, communication protocols must satisfy the strict energy constraints that are unavoidable at this scale while remaining robust to malicious attacks. Our results address this challenge by demonstrating the feasibility of a critical communication primitive in the face of a powerful adversary who controls $\Theta(n)$ devices in a dense WSN. Moreover, the correct devices enjoy a significant advantage in terms of energy expenditure. A critical open question is whether these resource-competitive results have an analogue in multi-hop WSNs. It would also be of interest to examine other fundamental distributed communication problems, such as consensus or leader election, from a resource-competitive perspective. \smallskip

\noindent {\bf Acknowledgements:} We thank Jared Saia and Valerie King for their invaluable comments and suggestions. 


\bibliographystyle{abbrv}
\bibliography{jam}
\end{document}